\tikzset{%
	>={Latex[width=2mm,length=2mm]},
	base/.style = {rectangle, rounded corners, draw=black,
		minimum width=0.3cm, minimum height=0.3cm,
		text centered, font=\texttrademark},
	OptimalPolicies/.style= {base, text=green},
	Estimation/.style= {base, text=red},
	Stabilization/.style= {base, text=blue},
	RLAlgo/.style= {base, text=violet},
}
\newtheorem{thm}{Theorem}
\newtheorem{lem}{Lemma}
\newtheorem{deff}{Definition}
\newtheorem{assum}{Assumption}
\newtheorem{cor}{Corollary}
\newtheorem{remark}{Remark}
\def \det {\mathrm{det}}
\def \exp {\mathrm{exp}}
\def \cov {\mathrm{Cov}}
\def \var {\boldsymbol{\mathrm{Var}}}
\def \R {\mathbb{R}}
\def \C {\mathbb{C}}
\def \tailconst {b_1}
\def \tailcoeff {b_2}
\def \tailexp {\alpha}
\def \cost {\mathcal{J}}
\def \para {\theta}
\def \rrate {\gamma}
\def \ssconstant {\rho}
\def \firstcoeff {6}
\def \policy {\pi}
\newcommand{\Mnorm}[2]{{\left\vert\kern-0.35ex\left\vert\kern-0.35ex\left\vert #1 
		\right\vert\kern-0.35ex\right\vert\kern-0.35ex\right\vert}_{#2}}
\newcommand{\Opnorm}[3]{{\left\vert\kern-0.35ex\left\vert\kern-0.35ex\left\vert #1 
		\right\vert\kern-0.35ex\right\vert\kern-0.35ex\right\vert}_{#2 \to #3}}
\newcommand{\norm}[2]{{\left\vert\kern-0.35ex\left\vert #1 
		\right\vert\kern-0.35ex\right\vert}_{#2}}
\newcommand{\eigmax}[1]{\left| \lambda_{\max} \left( #1 \right)\right|}
\newcommand{\eigmin}[1]{\left| \lambda_{\min} \left( #1 \right)\right|}
\newcommand{\tr}[1]{\boldsymbol{\mathrm{tr}} \left( #1 \right)}
\newcommand{\PP}[1]{\mathbb{P} \left(#1\right)}
\newcommand{\E}[1]{\boldsymbol{\mathbb{E}} \left[#1\right]}
\newcommand{\MJordanconst}[2]{\boldsymbol{\eta}_{#1}\left(#2\right)}
\newcommand{\mult}[1]{\mu_{#1}}
\newcommand{\samplesize}[3]{{N}_{}\left(#2,#3\right)}
\newcommand{\event}[1]{{\mathcal{#1}}}
\newcommand{\noisemax}[2]{\boldsymbol{\nu}_{#1}\left(#2\right)}
\newcommand{\diag}[1]{\boldsymbol{\mathrm{diag}}\left(#1\right)}
\newcommand{\innerproductminconstant}[1]{\psi_0}
\newcommand{\regret}[1]{\mathcal{R} \left(#1\right)}
\newcommand{\loss}[3]{\mathcal{L}_#1^{#2} \left(#3\right)}
\newcommand{\symmetrizer}[1]{\Phi \left(#1\right)}
\newcommand{\dimension}[2]{\mathrm{dim}_{#1} \left(#2\right)}
\newcommand{\Lipschitz}[1]{\chi_{_{#1}}}
\newcommand{\paraspace}[1]{\Omega^{(#1)}}
\newcommand{\tempparaspace}[1]{\Gamma^{(#1)}}
\newcommand{\Kmatrix}[1]{K\left(#1\right)}
\newcommand{\Lmatrix}[1]{L\left(#1\right)}
\newcommand{\extendedLmatrix}[1]{\widetilde{L}\left(#1\right)}
\newcommand{\avecost}[2]{\overline{\mathcal{J}}_{#2}\left(#1\right)}
\newcommand{\optcost}[1]{{\mathcal{J}^\star}\left(#1\right)}
\newcommand{\instantcost}[1]{c_{#1}}
\newcommand{\predbound}[2]{\boldsymbol{\xi}_{#1}\left(#2\right)}
\newcommand{\prediction}[2]{\boldsymbol{\beta}_{#1}\left(#2\right)}
\newcommand{\order}[1]{\mathcal{O} \left(#1\right)}
\newcommand{\orderlog}[1]{\widetilde{\mathcal{O}} \left(#1\right)}
\newcommand{\optpara}[1]{\widetilde{\theta}^{(#1)}}
\newcommand{\trans}[1]{D_{#1}}
\newcommand{\esttrans}[1]{\widehat{D}_{#1}}
\newcommand{\term}[1]{\mathbb{Y}_{#1}}
\newcommand{\episodecount}[1]{m\left(#1\right)}
\newcommand{\edit}[1]{\textcolor{black}{#1}}
\newif\ifarxiv
\begin{document}
\title{Optimism-Based Adaptive Regulation of Linear-Quadratic Systems}
\author{Mohamad Kazem~Shirani Faradonbeh,
	Ambuj~Tewari,
	and~George~Michailidis}
\maketitle

\begin{abstract}
\edit{The main challenge for adaptive regulation of linear-quadratic systems is the trade-off between identification and control. An adaptive policy needs to address both the estimation of unknown dynamics parameters (exploration), as well as the regulation of the underlying system (exploitation). To this end, optimism-based methods which bias the identification in favor of optimistic approximations of the true parameter are employed in the literature. A number of asymptotic results have been established, but their finite time counterparts are few, with important restrictions.}

\edit{This study establishes results for the worst-case regret of optimism-based adaptive policies. The presented high probability upper bounds are optimal up to logarithmic factors. The non-asymptotic analysis of this work requires very mild assumptions; (i)~stabilizability of the system's dynamics, and (ii)~limiting the degree of heaviness of the noise distribution. To establish such bounds, certain novel techniques are developed to comprehensively address the probabilistic behavior of dependent random matrices with heavy-tailed distributions.}
\end{abstract}
\begin{IEEEkeywords}
	Regret Bounds, Optimism in the Face of Uncertainty, Certainty Equivalence, Exploration-Exploitation, Reinforcement Learning.  
\end{IEEEkeywords}

\section{Introduction} \label{Intro} \label{background}
\IEEEPARstart{A}{daptive} control of Linear-Quadratic (LQ) state space models represents a canonical problem, and is the main focus of this work. Such a model describes the dynamics of the system as follows: starting from the initial state $x(0) \in \R^{p}$, its temporal evolution and cost are determined by
\begin{eqnarray}
x(t+1) &=& A_0x(t)+B_0u(t)+ w(t+1), \label{systemeq1} \\
\instantcost{t}&=& x(t)'Qx(t) + u(t)'Ru(t), \label{systemeq2}
\end{eqnarray}
for $t=0,1,\cdots$. The vector $x(t) \in \R^p$ denotes the output (and state) of the system at time $t$, $u(t) \in \R^r$ represents the control signal, and the stochastic process of the noise sequence is denoted by $\left\{ w(t) \right\}_{t=1}^\infty$. Further, the quadratic function $\instantcost{t}$ corresponds to the instantaneous cost of the system (the transpose of the vector $v$ is denoted by $v'$). The transition matrix $A_0 \in \mathbb{R}^{p \times p}$ and the input matrix $B_0 \in \mathbb{R}^{p \times r}$ which constitute the dynamical parameters of the system are {\em unknown}, while the positive definite matrices of the cost, $Q \in \R^{p \times p}, R \in \R^{r \times r}$ are assumed known. 

The main goal is to adaptively regulate the system in order to minimize its long-term average cost. This canonical problem has been thoroughly studied in the literature and a number of asymptotic results have been established, as briefly summarized next. However, finite time results are scarce and rather incomplete, despite their need in applications (e.g. network systems~\cite{faradonbeh2016optimality}). \edit{Note that the theoretical guarantee for fast stabilization of general linear systems has been recently established~\cite{faradonbeh2018stabilization}, but the existing analysis of the regulation problem of cost minimization leads to a remarkable loss of generality, as will be discussed shortly.}

Since the system dynamics are unknown, a popular adaptive procedure for regulation is based on the principle of {\em Certainty Equivalence} (CE) \cite{bar1974dual}. Alternating between estimation and regulation, CE applies a control feedback {\em as if} the identified parameters $A_0,B_0$ are the true matrices that drive the system's evolution~\cite{lai1986asymptotically,guo1988convergence,lai1991parallel}. However, it has been shown that the CE based strategy can lead to wildly incorrect parameter estimates \cite{lai1982least,becker1985adaptive,kumar1990convergence}, and thus suitable modifications have been introduced in the literature \cite{campi1998adaptive,bittanti2006adaptive}. A popular approach, known as {\em Optimism in the Face of Uncertainty} (OFU) \cite{lai1985asymptotically}, was developed to address the suboptimality of CE. In OFU, after constructing a confidence set for the model parameters, a regulation policy is designed 
based on the most {\em optimistic} parameter in the confidence set \cite{campi1997achieving}. 

The above references establish the asymptotic convergence of the \emph{average} cost to the optimal value. However, non-asymptotic results on the growth rate of {\em regret} (i.e. the accumulative deviation from the optimal cost, see \eqref{regretdeff}) have recently appeared \cite{abbasi2011regret,ibrahimi2012efficient}. These papers provide a near-optimal upper bound for the regret of OFU, under the following rather restrictive conditions: 
\begin{enumerate}
	\item \label{(i)} 
	The dynamics matrices are assumed to be {\em controllable} and {\em observable}. This leads to an excessive complexity in the computation of the adaptive regulator. Further, this assumption restricts the applicability of the analysis since the condition may be violated in many LQ systems.
	\item \label{(iii)}
	The {\em operator norm} of the closed-loop matrix is less than one, which excludes a remarkable fraction of systems with stable closed-loop matrices. In fact, a stable matrix can have an arbitrarily large operator norm. Note that condition \ref{(i)} only implies that the largest closed-loop eigenvalue (not the operator norm) is less than one~\cite{bertsekas1995dynamic}.
	\item \label{(iv)}
	The noise distribution satisfies a tail condition such as {\em sub-Gaussianity} \cite{abbasi2011regret} or Gaussianity \cite{ibrahimi2012efficient}. Moreover, the coordinates of the noise vectors are uncorrelated.
\end{enumerate}

\edit{This work aims to address these shortcomings by providing a comprehensive treatment of the problem. We study optimality of OFU policies for an extensive family of LQ systems by establishing upper bounds for the \emph{worst-case} regret, under a minimal set of assumptions.} Namely, we remove condition \ref{(i)} above, and replace the strict condition \ref{(iii)} with \emph{stabilizability}, which is the necessary assumption for the optimal control problem to be well-defined. Further,
the high probability near-optimal upper bound for regret established in this work holds for a class of \emph{heavy-tailed} noise vectors with arbitrary correlation structures, thus significantly
relaxing condition \ref{(iv)}. \edit{To the authors' knowledge, this work is the first addressing the non-asymptotic analysis of the regret of adaptive policies for general LQ systems.}

There are a number of conceptual and technical difficulties one needs to address in order to obtain the results of optimal regulation. First, existing methodology for analyzing adaptive policies~\cite{bittanti2006adaptive,abbasi2011regret,ibrahimi2012efficient} becomes non-applicable beyond condition \ref{(iii)}. One reason is due to the fact that matrix multiplication preserves the operator norm; i.e. the norm of the product is upper bounded by the product of the norms. However, the product of two stable matrices can have eigenvalues of arbitrarily large magnitude. Further, sub-Weibull distributions assumed in this work do not need to have generating functions~\cite{faradonbeh2018finite}. Hence, new tools are required to establish concentration inequalities for random matrices with heavy-tailed probability distributions~\cite{tropp2012user,brown1971martingale}.

\edit{In addition, an adaptive strategy is needed to stabilize the system so that the uncertainty about $A_0,B_0$ does not lead to instability. Adaptive stabilization methods are proposed before, and their finite time performance analysis is provided~\cite{faradonbeh2018stabilization}. First, a coarse approximation of the unknown dynamics parameter is shown to be enough for stabilization. Then, it is established that such approximations can be achieved by employing independent random feedbacks in sufficiently many periods. Nevertheless, for non-asymptotic analysis of the performance of regulation policies, a comprehensive study is not currently available, and is adopted as the focus of this study. In case the operator is concerned with stability issues, the algorithm in the aforementioned reference can be applied a priori to the regulation algorithms we discuss here. }

The remainder of this paper is organized as follows. Section \ref{Optimal Policies} formally defines the problem. Section
\ref{Estimation} addresses the problem of accurate estimation of the closed-loop matrix and includes the analysis of the empirical covariance matrix, as well as a high probability prediction bound. Finally, an optimism-based algorithm for adaptive regulation of the system is presented in Section \ref{Adaptive Control Algorithms}. We show that the regret of {\bf Algorithm \ref{Gregretalgo}} is with high-probability optimal, up to a logarithmic factor.  

The following notation is used throughout this paper. For matrix $A \in \C^{p \times q}$, $A'$ is its transpose. When $p=q$, the smallest (respectively largest) eigenvalue of $A$ (in magnitude) is denoted by $\lambda_{\min} (A)$ (respectively $\lambda_{\max}(A)$) and the trace of $A$ is denoted by $\tr{A}$. For $\gamma \in \mathbb{R}, \gamma \geq 1, v \in \mathbb{C}^q$,
the norm of $v$ is $\norm{v}{\gamma} = \left(\sum\limits_{i=1}^{q} \left|v_i\right|^\gamma \right)^{1/\gamma}$. 
Further, when $\gamma=\infty$, the norm is defined according to $\norm{v}{\infty} = \max \limits_{1 \leq i \leq q} |v_i|$. We also use the following notation for the operator norm of matrices. For $\beta, \gamma \in \left[1,\infty\right]$, and $A \in \C^{p \times q}$, define 
$$\Opnorm{A}{\gamma}{\beta} = \sup \limits_{v \in \C^{q} \setminus \{0\}} \frac{\norm{Av}{\beta}}{\norm{v}{\gamma}}.$$
Whenever $\gamma = \beta$, we simply write $\Mnorm{A}{\beta}$. \ifarxiv To denote the dimension of manifold $\mathcal{M}$ over the field $F$, we use $\dimension{F}{\mathcal{M}}$.\fi Finally, the sigma-field generated by random vectors $X_1,\cdots,X_n$ is denoted by $\sigma \left( X_1,\cdots,X_n \right)$. The notation for $\para,\Kmatrix{\para}, \Lmatrix{\para}$, and $\extendedLmatrix{\para}$ are provided in Definition \ref{paradeff}, equations \eqref{ricatti2}, \eqref{ricatti1}, and Definition \ref{Ltildedeff}, respectively. \edit{Finally, $\log$ is employed throughout the paper to refer to the natural logarithm function.}

\section{Problem Formulation} \label{Optimal Policies}
First, we formally discuss the problem of adaptive regulation this work is addressing. Equation \eqref{systemeq1} depicts the dynamics of the system, where $\left\{ w(t) \right\}_{t=1}^\infty$ are independent mean-zero noise vectors with full rank covariance matrix $C$: 
\begin{equation*}
\E {w(t)}=0, \:\: \E {w(t)w(t)'}=C, \:\:\: \eigmin{C}>0.
\end{equation*}
The results established also hold if the noise vectors are martingale difference sequences. The true dynamics are assumed to be stabilizable, as defined below.
\begin{deff}[Stabilizability \cite{bertsekas1995dynamic}] \label{stabilizability}
	$\left[A_0,B_0\right]$ is stabilizable if there is $L \in \R ^{r \times p}$ such that $\eigmax{A_0+B_0L} < 1$. The linear feedback matrix $L$ is called a stabilizer.
\end{deff}
\begin{deff}[Notation $\para$] \label{paradeff}
	We use $\para$ to denote the dynamics parameter $\left[A,B\right]$, where $A$ and $B$ are $p \times p$ and $p \times r$ matrices, respectively. Obviously $\para \in \R^{p \times q}$, for $q=p+r$. \edit{In particular, we frequently refer to $\para_0=\left[A_0,B_0\right]$ throughout the paper.}
\end{deff} 
Here, we consider {\em perfect observations}, i.e. the output of the system corresponds to the state vector itself. Next, an admissible control policy is a mapping $\policy$ which designs the control action according to the dynamics matrix $\para_0$, the cost matrices $Q,R$, and the history of the system; i.e. for all $t \geq 0$, $$u(t) = \policy \left( \para_0 , Q, R , \left\{ x(i) \right\}_{i=0}^t, \left\{ u(j) \right\}_{j=0}^{t-1} \right).$$ An {\em adaptive} policy is ignorant about the parameter $\para_0$. So, $$u(t) = \policy \left( Q, R , \left\{ x(i) \right\}_{i=0}^t, \left\{ u(j) \right\}_{j=0}^{t-1} \right).$$ When applying the policy $\policy$, the resulting instantaneous quadratic cost at time $t$ defined according to \eqref{systemeq2} is denoted by $\instantcost{t}^{(\policy)}$. If there is no superscript, the corresponding policy will be clear from the context. \edit{For arbitrary policy $\policy$, let $\avecost{\para_0}{\policy}$ be the average cost of the system:
\begin{equation*}
\avecost{\para_0}{\policy} = \limsup \limits_{T \to \infty} \frac{1}{T} \sum \limits_{t=1}^T {\instantcost{t}^{(\policy)}}.
\end{equation*}}
Note that the dependence of $\avecost{\para_0}{\policy}$ to the known cost matrices $Q,R$ is suppressed. Then, the optimal average cost is defined by $\optcost{\para_0} = \min \limits_{\policy} \avecost{\para_0}{\policy}$, where the minimum is taken over {\em all} admissible policies. Further, $\policy^\star$ is called an {\em optimal} policy for system $\para$, if satisfying $\avecost{\para}{\policy^\star} = \optcost{\para}$. To find $\policy^\star$ for general $\para \in \R^{p \times q}$, one has to solve a \emph{Riccati} equation. A solution, is a positive semidefinite matrix $\Kmatrix{\para}$ satisfying 
\begin{eqnarray} 
	&&\Kmatrix{\para} = Q + A'\Kmatrix{\para}A \notag \\
	&&- A' \Kmatrix{\para}B \left(B'\Kmatrix{\para}B+R\right)^{-1} B'\Kmatrix{\para}A . \label{ricatti2}
\end{eqnarray}
The following result establishes optimality of the linear feedback provided by $\Kmatrix{\para}$ according to
\begin{eqnarray}
\Lmatrix{\para} &=& -\left(B'\Kmatrix{\para}B+R\right)^{-1} B'\Kmatrix{\para}A. \label{ricatti1}
\end{eqnarray}
\begin{deff}[Policy $\policy^\star$]
	\edit{Henceforth, let $\policy^\star$ denote the linear feedback policy $u(t) = \Lmatrix{\para_0} x(t)$, for all $t \geq 0$.} 
\end{deff}
\begin{lem}[Optimality \cite{faradonbeh2018stabilization}] \label{stabilizable}
	If $\para_0$ is stabilizable, then \eqref{ricatti2} has a unique solution, $\policy^\star$ is optimal, and $\optcost{\para_0}=\tr {\Kmatrix{\para_0}C}$. 
	Conversely, if $\Kmatrix{\para_0}$ is a solution of \eqref{ricatti2}, $\Lmatrix{\para_0}$ is a stabilizer. 
\end{lem}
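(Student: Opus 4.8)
The plan is to treat this as classical discrete-time LQR / algebraic Riccati theory specialized to the average-cost criterion, organized into (a)~existence and uniqueness of the Riccati solution, (b)~a completing-the-square verification that simultaneously establishes optimality of $\policy^\star$ and the formula $\optcost{\para_0}=\tr{\Kmatrix{\para_0}C}$, and (c)~a short algebraic converse. Throughout I would write $M = A_0 + B_0 \Lmatrix{\para_0}$ for the closed-loop matrix, and note that since $Q$ is positive definite the pair $(A_0,Q^{1/2})$ is trivially detectable, which together with stabilizability is exactly the hypothesis under which \eqref{ricatti2} is known to admit a unique positive semidefinite solution.

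For existence I would construct $\Kmatrix{\para_0}$ by value iteration: set $K_0 = 0$ and let $K_{n+1}$ be the image of $K_n$ under the Riccati map on the right-hand side of \eqref{ricatti2}. This map is monotone on positive semidefinite matrices, so $\{K_n\}$ is nondecreasing; and fixing any stabilizer $L$ (which exists by Definition \ref{stabilizability}) and comparing with the stationary linear policy $u(t)=Lx(t)$, whose value is the unique solution of the Lyapunov equation $K_L = (A_0+B_0L)'K_L(A_0+B_0L) + Q + L'RL$, furnishes the uniform upper bound $K_n \preceq K_L$. Hence $K_n$ converges to a positive semidefinite limit solving \eqref{ricatti2}. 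Uniqueness among positive semidefinite solutions is the standard stabilizability-plus-detectability fact, which I would either cite or recover from the converse below (any two solutions are both stabilizing, and their difference satisfies a Lyapunov recursion driven by a stable matrix, forcing it to vanish).

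The core is the verification step. Expanding $\E{(A_0 x + B_0 u + w)' \Kmatrix{\para_0} (A_0 x + B_0 u + w)}$, using $\E{w}=0$ and $\E{w w'}=C$, and minimizing the resulting quadratic in $u$ recovers exactly the minimizer $u=\Lmatrix{\para_0}x$ of \eqref{ricatti1}; substituting this value and invoking \eqref{ricatti2} collapses the Bellman right-hand side to $x'\Kmatrix{\para_0}x + \tr{\Kmatrix{\para_0}C}$. This verifies the average-cost optimality equation with relative value $x'\Kmatrix{\para_0}x$ and optimal average cost $\tr{\Kmatrix{\para_0}C}$. To extend it to \emph{all} admissible policies, I would rewrite the optimality equation as the one-step inequality $\instantcost{t} \ge x(t)'\Kmatrix{\para_0}x(t) + \tr{\Kmatrix{\para_0}C} - \E{x(t+1)'\Kmatrix{\para_0}x(t+1)}$, sum over $t=1,\dots,T$, and take expectations; the sum telescopes, leaving a terminal term $\E{x(T+1)'\Kmatrix{\para_0}x(T+1)}/T$. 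Equality holds throughout for $\policy^\star$ (whose stationary, ergodic closed loop makes the pathwise $\limsup$ coincide with $\tr{\Kmatrix{\para_0}C}$), giving the upper bound, while the inequality yields $\avecost{\para_0}{\policy} \ge \tr{\Kmatrix{\para_0}C}$ for every $\policy$.

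I expect the main obstacle to be precisely the control of this terminal term inside the $\limsup$ defining $\avecost{\para_0}{\policy}$, since $\Kmatrix{\para_0}\succeq 0$ only guarantees it is nonnegative, not $o(T)$. Policies of infinite average cost satisfy the lower bound trivially, so the difficulty is confined to finite-cost policies; for these, positive definiteness of $Q$ forces the Cesàro average of $\E{\norm{x(t)}{2}^2}$ to be bounded, and a short argument then shows $\liminf_t \E{\norm{x(t)}{2}^2}/t = 0$, producing a subsequence along which the terminal term vanishes and the lower bound survives. Finally, the converse is a clean computation: the identity $\Kmatrix{\para_0} = M'\Kmatrix{\para_0}M + Q + \Lmatrix{\para_0}'R\,\Lmatrix{\para_0}$ follows by substituting \eqref{ricatti1} into \eqref{ricatti2}, and sandwiching it with an eigenvector $v$ of $M$ with eigenvalue $\lambda$ gives $(1-|\lambda|^2)\,v^\ast \Kmatrix{\para_0} v = v^\ast(Q + \Lmatrix{\para_0}'R\,\Lmatrix{\para_0})v > 0$ because $Q \succ 0$; this forces $v^\ast \Kmatrix{\para_0} v > 0$ and hence $|\lambda| < 1$, so $\Lmatrix{\para_0}$ stabilizes $\left[A_0,B_0\right]$.
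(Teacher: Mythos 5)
Your proposal cannot be checked against an in-paper proof because there is none: Lemma~\ref{stabilizable} is imported by citation from \cite{faradonbeh2018stabilization}, and neither the body nor the appendix of this paper proves it. Judged on its own, your argument is the standard discrete-time Riccati/verification route and is essentially sound: value iteration with the Lyapunov comparison bound for existence, the completing-the-square average-cost optimality equation for the value $\tr{\Kmatrix{\para_0}C}$, the subsequence trick (via $Q\succ 0$ forcing bounded Ces\`aro averages of $\E{\norm{x(t)}{2}^2}$) to kill the terminal term, and the eigenvector computation $(1-|\lambda|^2)\,v^\ast \Kmatrix{\para_0}v = v^\ast(Q+\Lmatrix{\para_0}'R\,\Lmatrix{\para_0})v>0$ for the converse, which correctly uses that a ``solution'' of \eqref{ricatti2} is by the paper's definition positive semidefinite. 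Two points deserve tightening. First, the paper defines $\avecost{\para_0}{\policy}$ as a \emph{pathwise} $\limsup$ of Ces\`aro averages, whereas your telescoping lower bound is carried out in expectation; to match the stated definition you would need the almost-sure version, i.e.\ sum the optimality inequality pathwise, observe that the cross terms $w(t+1)'\Kmatrix{\para_0}x(t+1)$ and the centered quadratics form a martingale, and invoke a martingale strong law before running the same terminal-term argument (the paper's own proof of Lemma~\ref{minregret} uses exactly this kind of pathwise machinery). Second, your uniqueness sketch is slightly imprecise: the difference of two positive semidefinite solutions satisfies $K_1-K_2=M_1'(K_1-K_2)M_2$ with \emph{two} distinct closed-loop matrices (each stable by your converse), and it is iterating this two-sided identity, not a single Lyapunov recursion, that forces $K_1=K_2$. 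Neither issue affects the overall correctness of the approach.
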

Note that in the latter case of Lemma \ref{stabilizable}, the existence of a solution $\Kmatrix{\para_0}$ implies that it is unique, $\policy^\star$ is an optimal policy, and $\optcost{\para_0}=\tr{ \Kmatrix{\para_0} C}$.

In order to measure the quality of (adaptive) policy $\policy$, the resulting cost will be compared to the optimal average cost defined above. More precisely, letting $\instantcost{t}^{(\policy)}$ be the resulting instantaneous cost at time $t$, {\em regret} at time $T$ is defined as 
\begin{equation} \label{regretdeff}
\regret{T} = \sum \limits_{t=1}^T \left[ \instantcost{t}^{(\policy)} - \optcost{\para_0} \right].
\end{equation}
The comparison between adaptive control policies is made according to regret. The next result describes the asymptotic distribution of the regret. Lemma \ref{minregret}, which is basically a Central Limit Theorem for $\regret{T}$, states that even when applying optimal policy, the regret $\regret{T}$ scales as $\order{T^{1/2}}$, multiplied by a normal random variable. 
{\begin{lem} \label{minregret}
	Applying $\policy^\star$, \edit{let $D=A_0+B_0\Lmatrix{\para_0}$ be the closed-loop matrix.} Then, $T^{-1/2} \regret{T}$ converges in distribution to $\mathcal{N}\left(0,\sigma^2\right)$ as $T$ grows, where
	\begin{eqnarray*}
	\sigma^2 &=& 4\: \tr{ \Kmatrix{\para_0} C \Kmatrix{\para_0} \sum\limits_{n=1}^\infty D^n C {D'}^n } \\
	&+& \lim\limits_{T\to \infty} T^{-1} \sum\limits_{t=1}^T \var \left[w(t)'\Kmatrix{\para_0} w(t)\right]>0.
	\end{eqnarray*}
\end{lem}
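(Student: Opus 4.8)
The plan is to reduce $\regret{T}$ to a martingale sum plus asymptotically negligible boundary terms, and then to invoke a martingale central limit theorem whose two hypotheses (conditional-variance convergence and a Lindeberg condition) I verify in turn. Write $K=\Kmatrix{\para_0}$ and $L=\Lmatrix{\para_0}$. Under $\policy^\star$ the closed loop obeys $x(t+1)=Dx(t)+w(t+1)$ and $\instantcost{t}=x(t)'\left(Q+L'RL\right)x(t)$. The average-cost Bellman equation attached to \eqref{ricatti2}--\eqref{ricatti1} supplies the algebraic identity $Q+L'RL=K-D'KD$, so that $\instantcost{t}=x(t)'Kx(t)-\left(Dx(t)\right)'K\left(Dx(t)\right)$. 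Substituting $Dx(t)=x(t+1)-w(t+1)$, expanding, summing over $t$, and subtracting $T\optcost{\para_0}=T\tr{KC}$ produces the telescoping representation
\begin{equation*}
\regret{T}=x(1)'Kx(1)-x(T+1)'Kx(T+1)+\sum_{t=1}^{T}\xi_t,\qquad \xi_t=2\,x(t)'D'Kw(t+1)+\left(w(t+1)'Kw(t+1)-\tr{KC}\right).
\end{equation*}

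Next I would fix the filtration $\mathcal{F}_t=\sigma\left(x(0),w(1),\ldots,w(t)\right)$, so that $x(t)$ is $\mathcal{F}_t$-measurable while $w(t+1)$ is independent of $\mathcal{F}_t$ with $\E{w(t+1)}=0$ and $\E{w(t+1)'Kw(t+1)}=\tr{KC}$; hence $\E{\xi_t\mid\mathcal{F}_t}=0$ and $\{\xi_t\}$ is a martingale difference sequence. The boundary terms are controlled by tightness: since $D$ is stable, $\E{x(T+1)x(T+1)'}\to\Sigma:=\sum_{n=0}^{\infty}D^nC{D'}^n$, so $x(T+1)'Kx(T+1)=\order{1}$ in probability and $T^{-1/2}\left(x(1)'Kx(1)-x(T+1)'Kx(T+1)\right)\to 0$ in probability. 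It therefore suffices to prove $T^{-1/2}\sum_{t=1}^{T}\xi_t\Rightarrow\mathcal{N}\left(0,\sigma^2\right)$.

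I would then apply the martingale CLT of \cite{brown1971martingale}. For the conditional variance, a direct computation using the independence of $w(t+1)$ gives $\E{\xi_t^2\mid\mathcal{F}_t}=4\,x(t)'D'KCKD\,x(t)+4\,x(t)'D'K\,m_3+\var\left[w(t+1)'Kw(t+1)\right]$, where $m_3=\E{w(t+1)\left(w(t+1)'Kw(t+1)\right)}$ is a fixed third-moment vector. Ergodicity of the stable linear process $\{x(t)\}$ yields the laws of large numbers $T^{-1}\sum_{t=1}^{T}x(t)'Mx(t)\to\tr{M\Sigma}$ and $T^{-1}\sum_{t=1}^{T}x(t)\to 0$, so the cross term averages to zero, and using the cyclic identity $\tr{D'KCKD\,\Sigma}=\tr{KCK\,D\Sigma D'}$ together with $D\Sigma D'=\sum_{n=1}^{\infty}D^nC{D'}^n$ I obtain $T^{-1}\sum_{t=1}^{T}\E{\xi_t^2\mid\mathcal{F}_t}\to\sigma^2$ in probability, which is exactly the stated variance. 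For the conditional Lindeberg condition I would invoke a Lyapunov-type bound: because $\xi_t$ is a quadratic form in the noise, the sub-Weibull tails furnish a uniform bound on $\E{\left|\xi_t\right|^{2+\delta}}$ for some $\delta>0$, whence $T^{-1-\delta/2}\sum_{t=1}^{T}\E{\left|\xi_t\right|^{2+\delta}}\to 0$. Positivity $\sigma^2>0$ follows since the Riccati identity forces $K\succeq Q\succ 0$, making $\var\left[w(t+1)'Kw(t+1)\right]>0$ for the non-degenerate noise (full-rank $C$).

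The main obstacle I anticipate is not the algebra but the rigorous verification of the CLT hypotheses. Establishing the in-probability limit of the normalized conditional variance requires a genuine law of large numbers for quadratic functionals of the dependent state sequence, and since $D$ is only guaranteed to have spectral radius below one while its operator norm $\Mnorm{D}{}$ may be arbitrarily large, the geometric decay of $\Mnorm{D^n}{}$ and the summability of $\Sigma$ must be secured through Gelfand's formula (or a Jordan-form bound) rather than submultiplicativity. The Lindeberg step is the second delicate point, as it is precisely where the heavy-tailed noise enters: one must trade on the finiteness of the relevant higher moments of a sub-Weibull law even though its moment generating function may fail to exist.
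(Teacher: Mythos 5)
Your proposal is correct and follows essentially the same route as the paper: the telescoping of $x(t)'Kx(t)$ via the identity $Q+L'RL=K-D'KD$ produces exactly the martingale difference sequence $2x(t)'D'Kw(t+1)+w(t+1)'Kw(t+1)-\tr{KC}$ that the paper itself uses (the paper reaches it more circuitously through $\tr{PV_T}$ and the sample Lyapunov equation $V_T=DV_TD'+E_T$, but since $\sum_{n\geq 0}D'^nPD^n=K$ the two representations coincide), and both arguments then invoke the martingale CLT of Brown with the same conditional-variance computation yielding $4\tr{KCK\,D\Sigma D'}+\lim_T T^{-1}\sum_t\var\left[w(t)'Kw(t)\right]$. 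Your explicit attention to the Lindeberg/Lyapunov condition under sub-Weibull tails and to controlling $\Mnorm{D^n}{2}$ via the spectral radius rather than submultiplicativity is, if anything, more careful than the paper's own treatment.
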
}
\edit{The proof of Lemma \ref{minregret} based on an application of the martingale Central Limit Theorem~\cite{brown1971martingale} is deferred to the \ifarxiv appendix\else supplementary materials \cite{faradonbeh2017finite}\fi. In the sequel, we discuss the result of Lemma~\ref{minregret}. In the definition of regret in \eqref{regretdeff}, the cumulative deviation from the optimal average cost can be decomposed into the following two fractions:
\\{\bf (i)} The \emph{probabilistic} fraction contributed by the stochastic evolution of the system and randomness of $\left\{w(t)\right\}_{t=1}^\infty$. 
\\{\bf (ii)} The \emph{statistical} fraction caused by the uncertainty about the dynamics and unknownness of $\para_0$ to the operator. 
\\Lemma \ref{minregret} states that the probabilistic fraction scales with the growth rate $\order{T^{1/2}}$. So, trying to push the statistical fraction of the regret (which is due to the error in learning the unknown dynamics) to have a rate less than $\order{T^{1/2}}$ is actually \emph{unnecessary}. Further, Lemma \ref{minregret} provides a lower bound for the worst-case regret of adaptive policies. Since the optimal policy for minimizing the expected cumulative cost $\sum\limits_{t=0}^T \E{\instantcost{t}}$ converges to $\policy^\star$ as $T$ grows \cite{bertsekas1995dynamic}, the regret of an arbitrary policy can not be smaller than that of $\policy^\star$. On the other hand, the high probability upper bound of a normal distribution is in magnitude at least $\left( - \log \delta \right)^{1/2}$.} Therefore, Lemma \ref{minregret} implies that a high probability regret bound to hold with probability at least $1-\delta$, needs to be at least of the order of magnitude of ${T^{1/2}} \left( - \log \delta \right)^{1/2}$. 
\edit{Note that the above argument does not necessarily imply \emph{impossibility} of the smaller magnitudes for the statistical fraction of the regret\footnote{for example, applying $\policy^\star$, we get $\lim\limits_{T \to \infty}T^{-1/2} \E{\regret{T}}=0$.}. However, since there are information theoretic limits in learning the unknown parameter $\para_0$, statistical regret can not be small. A rigorous derivation of lower bounds for the statistical regret is beyond the scope of this work. Although, later on we will intuitively discuss efficiency of the rate $T^{1/2}$, based on the decomposition being used in the regret analysis of Section \ref{Adaptive Control Algorithms}.}
\begin{deff}[Notation $\extendedLmatrix{\para}$] \label{Ltildedeff}
	For arbitrary stabilizable $\para_1,\para_2$, let $\extendedLmatrix{\para_1}=\begin{bmatrix} I_p \\ \Lmatrix{\para_1} \end{bmatrix}$. So, $\para_2 \extendedLmatrix{\para_1}=A_2+B_2\Lmatrix{\para_1}$.
\end{deff}
\section{Closed-Loop Identification} \label{Estimation}
When applying linear feedback $L \in \R^{r \times p}$ to the system, the closed-loop dynamics becomes $x(t+1)=\trans{}x(t)+w(t+1)$, where $\trans{}=A_0+B_0L$. Subsequently, we present bounds for the time length the user can interact with the system in order to have sufficiently many observations for accurate identification of the closed-loop matrix. The next set of results are used later on to construct the confidence sets being used to design the adaptive policy. \edit{Since the focus is on adaptive policies for \emph{regulating} the system, the matrix $\trans{}$ is assumed to be stable.}

\edit{First, we define least-squares estimation for matrix $\trans{}$, as follows. Observing the state vectors $\left\{x(t)\right\}_{t=0}^n$, for an arbitrary matrix $M \in \R^{p \times p}$ consider the sum-of-squares loss function 
\begin{equation*}
	\loss{n}{}{M}=\sum\limits_{t=0}^{n-1} \norm{x(t+1)-M x(t)}{2}^2.
\end{equation*}
Then, the true closed-loop transition matrix $\trans{}$ is estimated by $\esttrans{n} $, which is a minimizer of the above loss: $\loss{n}{}{\esttrans{n}} = \min\limits_{M \in \R^{p \times p}} \loss{n}{}{M}$. Solving for $\esttrans{n}$, one can easily see that it admits the closed form expression
\begin{equation*}
\esttrans{n}= \sum\limits_{t=0}^{n-1} x(t+1)x(t)'V_n^{-1},
\end{equation*}
where $V_n=\sum\limits_{t=0}^{n-1} x(t)x(t)'$ denotes the (invertible) empirical covariance matrix of the state process. Therefore, the behavior of $V_n$ needs to be carefully studied. To this end, one needs to tightly examine the state sequence $\left\{x(t) \right\}_{t=0}^{n}$, which in turn highly depends on both the spectral properties of the transition matrix $\trans{}$, as well as the noise process $\left\{ w(t) \right\}_{t=1}^n$. The former is reflected through the constant $\MJordanconst{}{\trans{}}$, while the latter is indicated by $\noisemax{n}{\delta}$ we shortly define.} 

To proceed, let $\trans{}=P^{-1}\Lambda P$ be the Jordan decomposition of $\trans{}$; i.e. $\Lambda$ is block diagonal, $\Lambda= \diag{\Lambda_1,\cdots, \Lambda_k}$, where for all $i=1,\cdots, k$, $\Lambda_i$ is a Jordan matrix of $\lambda_i$:
\begin{equation*}
\Lambda_i = \begin{bmatrix}
\lambda_i & 1 & 0 & \cdots  & 0 \\
0 & \lambda_i & 1 &  \cdots & 0 \\
\vdots & \vdots & \vdots & \vdots & \vdots \\
0 & 0  & \cdots & 0 & \lambda_i
\end{bmatrix} \in \mathbb{C}^{m_i \times m_i}.
\end{equation*}
\begin{deff} [Constant $\MJordanconst{}{\trans{}}$]  \label{Jordandeff} \label{2Jordandeff}
	Denote the Jordan decomposition described above by $\trans{}=P^{-1} \Lambda P$. Letting 
	\begin{equation*}
	\MJordanconst{t}{\Lambda_i}=\inf\limits_{\rho \geq \left|\lambda_i\right|} t^{m_i-1} \rho^t \sum\limits_{j=0}^{m_i-1} \frac{\rho^{-j}}{j!},
	\end{equation*}
	for $t \geq 1$, define $\MJordanconst{t}{\Lambda}= \max \limits_{1 \leq i \leq k} \MJordanconst{t}{\Lambda_i}$. Then, let $\MJordanconst{0}{\Lambda}=1$, and
	\begin{equation*}
	\MJordanconst{}{\trans{}} = \Opnorm{P^{-1}}{\infty}{2} \Mnorm{P}{\infty} \sum\limits_{t=0}^{\infty} \MJordanconst{t}{\Lambda}.
	\end{equation*}
\end{deff}
{Letting $\overline{\lambda}=\eigmax{\trans{}}$, if $\trans{}$ is \emph{diagonalizable}, then clearly  $\MJordanconst{}{\trans{}} \leq \Opnorm{P^{-1}}{\infty}{2} \Mnorm{P}{\infty} \left(1 - \overline{\lambda}\right)^{-1}$. In general, denoting the dimension of the largest block in the Jordan decomposition of $\trans{}$ by $\mult{}=\max\limits_{1 \leq i \leq k} m_i$, we have $\MJordanconst{t}{\Lambda} \leq t^{\mult{}-1} \overline{\lambda}^t e^{1/\overline{\lambda}}$; i.e.
\begin{eqnarray*}
	\MJordanconst{}{\trans{}} \leq \Opnorm{P^{-1}}{\infty}{2} \Mnorm{P}{\infty} e^{1/\overline{\lambda}}
	\left[\frac{\mult{}-1}{- \log \overline{\lambda}} +  \frac{ \left( \mult{} -1 \right)!}{\left(- \log \overline{\lambda}\right)^{\mult{}}}\right].
\end{eqnarray*}}
Toward studying the effect of the noise vectors on the state process, the following tail condition is assumed.
\begin{assum}[Sub-Weibull distribution \cite{faradonbeh2018finite}] \label{tailcondition} 
	There are positive reals $\tailconst, \tailcoeff$, and $\tailexp$, such that for all $t\geq 1; 1 \leq i \leq p$; and $y >0$,  
	\begin{equation*}
	\PP{\left|w_i(t)\right| > y} \leq \tailconst \: \exp \left(-\frac{y^\tailexp}{\tailcoeff}\right).
	\end{equation*}	
\end{assum}
Clearly, the smaller the exponent $\tailexp$ is, the heavier the tail of $w_i(t)$ will be. Assuming a sub-Weibull distribution for the noise coordinates is more general than the sub-Gaussian (or sub-exponential) assumption routinely made in the literature of non-asymptotic analysis \cite{abbasi2011regret}, where $\tailexp \geq 2$ ($\tailexp \geq 1$). \edit{To gain insight into the basic properties of sub-Weibull distributions, consider the setting $\tailexp<1$. It delivers an extensive family of distributions for which moments of all orders are well-defined, while the moment generating function does not exist. So, it relaxes more restrictive tail conditions to a minimal framework that finite time concentration results can be established. Further, Assumption \ref{tailcondition} encompasses fundamental distributions that sub-Exponential families fail to capture, such as polynomials of Gaussian random variables.} Finally, to obtain analogous results for uniformly bounded noise sequences, it suffices to let $\tailexp \to \infty$ in the subsequently presented materials. 

In order to study magnitudes of the state vectors over time, define: 
\begin{eqnarray}
\noisemax{n}{\delta} &=& \left(\tailcoeff \log \left( \frac{\tailconst np}{\delta}\right)\right)^{1/\tailexp} , \label{noisemaxdeff}\\
\predbound{n}{\delta} &=& \MJordanconst{}{\trans{}} \left(\norm{x(0)}{\infty}  + \noisemax{n}{\delta} \right). \label{predbounddeff}
\end{eqnarray}
Lemma \ref{noisebound} and Lemma \ref{statenorm} show that $\noisemax{n}{\delta},\predbound{n}{\delta}$ are the high probability uniform bounds for the size of the noise and the state vectors. As a matter of fact, $\noisemax{n}{\delta},\predbound{n}{\delta}$ scale as $\log^{1/\tailexp}\left( n/\delta\right)$. Hence, for uniformly bounded noise, both of them are fixed constants. 
Then, recalling that $C$ is the positive definite covariance matrix of the noise vectors, let $\samplesize{\ref{stablemin}}{\epsilon}{\delta}$ be large enough, such that the followings hold for all $n \geq \samplesize{\ref{stablemin}}{\epsilon}{\delta}$:
\begin{eqnarray}
\frac{n}{\noisemax{n}{\delta}^2} &\geq& \frac{18 \eigmax{C}+2\epsilon}{ \epsilon^2} p \log \left(\frac{4p}{\delta}\right) , \label{stablemineq1}\\
\frac{n}{\predbound{n}{\delta}^2 \noisemax{n}{\delta}^2} &\geq& \frac{288}{\epsilon^2} p \Mnorm{D}{2}^2 \log \left(\frac{4p}{\delta}\right) ,\label{stablemineq2} \\
\frac{n}{\predbound{n}{\delta}^2} &\geq& \frac{6}{\epsilon} \left(\Mnorm{D}{2}^2+1\right) .\label{stablemineq3}
\end{eqnarray} 
The following result provides a high probability lower bound for the smallest eigenvalue of $V_{n+1}$. \edit{Essentially, Theorem \ref{stablemin} determines the number of state observations needed to ensure that the excitation is persistent enough to identify the closed-loop matrix \cite{green1985persistence,lai1986extended}.}
\begin{thm}[Empirical covariance] \label{stablemin}
	If $n \geq \samplesize{\ref{stablemin}}{\epsilon}{\delta}$, then
	\begin{equation*}
	\PP{\eigmin{V_{n+1}} < n \left( \eigmin{C}-\epsilon \right)} < 2 \delta.
	\end{equation*}
	Moreover, $\lim\limits_{n \to \infty} n^{-1}V_n=\sum\limits_{i=0}^{\infty} D^i C {D'}^i$.
\end{thm}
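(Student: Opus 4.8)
The plan is to exploit the one–step recursion $x(t)=D\,x(t-1)+w(t)$ in order to reduce a lower bound on $\eigmin{V_{n+1}}$ to the concentration of a sum of i.i.d.\ outer products of the noise, plus a mean–zero cross term. Expanding $x(t)x(t)'$, summing $t=1,\dots,n$, and discarding the positive semidefinite pieces $x(0)x(0)'$ and $D V_n D'$, I would first record the matrix inequality
\begin{equation*}
V_{n+1} \;\succeq\; \underbrace{\sum_{t=1}^{n} w(t)w(t)'}_{S_1} \;+\; \underbrace{\sum_{t=1}^{n}\Big(w(t)x(t-1)'D' + D\,x(t-1)w(t)'\Big)}_{S_2}.
\end{equation*}
Consequently $\eigmin{V_{n+1}} \ge \eigmin{S_1} - \Mnorm{S_2}{2}$, so it suffices to establish $S_1 \succeq n(\eigmin{C}-\epsilon/2)I$ and $\Mnorm{S_2}{2}\le n\epsilon/2$, each on an event of probability at least $1-\delta$.

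Both bounds rest on first conditioning on the uniform magnitude bounds supplied by Lemma \ref{noisebound} and Lemma \ref{statenorm}, namely $\norm{w(t)}{\infty}\le\noisemax{n}{\delta}$ and $\norm{x(t)}{\infty}\le\predbound{n}{\delta}$ for all $t\le n$. This truncation is what renders the heavy-tailed (sub-Weibull, possibly without a moment generating function) setting tractable: it turns $S_1$ into a sum of bounded i.i.d.\ matrices, to which a matrix Bernstein/Azuma inequality applies with a variance proxy governed by $\eigmax{C}$ and an increment scale governed by $\noisemax{n}{\delta}^2$, and condition \eqref{stablemineq1} is precisely calibrated to yield $\eigmin{S_1}\ge n(\eigmin{C}-\epsilon/2)$. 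For $S_2$, setting $\mathcal F_t=\sigma(w(1),\dots,w(t))$, each summand has conditional mean zero given $\mathcal F_{t-1}$, so $S_2$ is a symmetric matrix martingale with increments bounded (on the truncation event) by $\mathcal{O}(\Mnorm{D}{2}\,\noisemax{n}{\delta}\,\predbound{n}{\delta})$ and predictable variation controlled by the same factors. A Freedman-type matrix concentration inequality then bounds $\Mnorm{S_2}{2}$ by $n\epsilon/2$ on a $1-\delta$ event; conditions \eqref{stablemineq2} and \eqref{stablemineq3}, which respectively carry the scales $\predbound{n}{\delta}^2\noisemax{n}{\delta}^2$ and $\predbound{n}{\delta}^2$, are exactly what this bound requires. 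A union bound over the two failure events produces the stated probability $2\delta$.

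For the ``moreover'' statement, I would divide the exact identity $V_{n+1}=x(0)x(0)' + S_1 + S_2 + D V_n D'$ by $n$ and pass to the limit almost surely. The strong law of large numbers gives $n^{-1}S_1 \to C$ (all moments of $w$ exist under Assumption \ref{tailcondition}); the martingale $S_2$ satisfies $n^{-1}S_2 \to 0$ by Kronecker's lemma together with its $\mathcal{O}(n\cdot\mathrm{polylog})$ quadratic variation; and $n^{-1}x(0)x(0)'\to 0$ while $n^{-1}(V_{n+1}-V_n)=n^{-1}x(n)x(n)'\to 0$ since the states grow only poly-logarithmically. Hence any limit $\Gamma$ of $n^{-1}V_n$ solves the discrete Lyapunov equation $\Gamma = C + D\Gamma D'$, whose unique solution for the stable matrix $D$ is $\sum_{i=0}^{\infty}D^iC{D'}^i$; existence of the limit follows from the summability of the autocovariances of the causal linear process $x(t)=D^tx(0)+\sum_{s=1}^{t}D^{t-s}w(s)$.

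The main obstacle is the concentration of $S_1$ and of the cross term $S_2$ without a moment generating function: the classical matrix Bernstein route is unavailable, so the argument must pass through the truncation events and must handle the delicate coupling between the bounded-magnitude event and the martingale increments (the truncation event is not $\mathcal F_{t-1}$-measurable), which is exactly where the paper's dedicated heavy-tailed random-matrix machinery enters. The remaining work is the bookkeeping that makes the numerical constants in \eqref{stablemineq1}--\eqref{stablemineq3} align so that the two error terms add up to at most $n\epsilon$ with total failure probability $2\delta$.
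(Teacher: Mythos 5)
Your proposal is correct and follows essentially the same route as the paper: the same recursion-based decomposition of $V_{n+1}$ into the noise covariance sum, the martingale cross term, and positive semidefinite remainders; the same truncation via Lemmas \ref{noisebound} and \ref{statenorm} to enable matrix Bernstein/Azuma concentration under sub-Weibull noise; and the same Lyapunov-equation argument for the limit. The only cosmetic difference is that you discard $DV_nD'$ directly rather than passing through the series representation $V_{n+1}=n\sum_{i\ge 0}D^iE_n{D'}^i$ (which also lets you avoid the boundary-term bound that the paper handles via \eqref{stablemineq3}), and your $\epsilon/2+\epsilon/2$ budget versus the paper's three-way $\epsilon/3$ split is just bookkeeping.
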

\begin{proof}[\bf {Proof}]
	First, for $n \geq 1$, and $0<\delta<1$, define the event:
	\begin{equation} \label{eventdeff}
	\event{W} = \left\{ \max\limits_{1 \leq t \leq n} \norm{w(t)}{\infty} \leq \noisemax{n}{\delta} \right\}.
	\end{equation}
	\edit{We use the following intermediate results, for which the proofs are delegated to the \ifarxiv appendix\else supplement, due to space limitations (also available online \cite{faradonbeh2017finite})\fi.} 
	\begin{lem} \label{noisebound}
		Defining $\event{W}$ according to \eqref{eventdeff}, we have $\PP{\event{W}} \geq 1-\delta$. 
	\end{lem}
	\begin{lem} \label{statenorm}
		The following holds on the event $\event{W}$ in \eqref{eventdeff}:
		\begin{equation*}
		\max\limits_{1 \leq t \leq n} \norm{x(t)}{2} \leq \predbound{n}{\delta}.
		\end{equation*}
	\end{lem}
	\begin{lem} \label{empiricalcov}
		Let the event $\event{W}$ be as \eqref{eventdeff}, and define $C_n=n^{-1}\sum\limits_{i=1}^{n}w(i)w(i)'$. Then, on $\event{W}$ we have $\PP{\eigmax{C_n-C}>\epsilon} \leq \delta$, if 
		\begin{equation} \label{empiricalcoveq1}
		\frac{n}{\noisemax{n}{\delta}^2} \geq \frac{6 \eigmax{C}+2\epsilon}{3 \epsilon^2} p \log \left(\frac{2p}{\delta}\right).
		\end{equation} 
	\end{lem}
	\begin{lem} \label{crossproduct}
		Let $U_n=n^{-1}\sum\limits_{i=0}^{n-1} \big[ Dx(i)w(i+1)' + w(i+1)x(i)'D'\big]$, and define $\event{W}$ by \eqref{eventdeff}. Then, on $\event{W}$ we have $\PP{\eigmax{U_n}>\epsilon} \leq \delta$, if
		\begin{equation} \label{crossproducteq1}
		\frac{n}{\Mnorm{D}{2}^2\noisemax{n}{\delta}^2\predbound{n}{\delta}^2} \geq \frac{32 p}{\epsilon^2} \log \left(\frac{2p}{\delta}\right).
		\end{equation} 
	\end{lem}
	Next, note that $x(t+1)=Dx(t)+w(t+1)$ implies
	\begin{eqnarray*}
		V_{n+1} = x(0)x(0)'+D\sum\limits_{i=0}^{n-1}x(i)x(i)'D' +nU_n+nC_n,
	\end{eqnarray*}
	where $C_n,U_n$ are defined in Lemma \ref{empiricalcov}, and Lemma \ref{crossproduct}. So, we obtain the Lyapunov equation $V_{n+1}=DV_{n+1}D'+nE_n$, for
	\begin{equation*}
	E_n=U_n+C_n+ \frac{D\left(x(0)x(0)' - x(n)x(n)'\right)D'}{n} +\frac{x(0)x(0)'}{n},
	\end{equation*}
	i.e.
	\begin{equation} \label{samplelyapeq}
	V_{n+1} = n \sum\limits_{i=0}^{\infty}D^i E_n {D'}^i.
	\end{equation} 
	Henceforth, suppose that $\event{W}$ holds. According to Lemma \ref{empiricalcov}, \eqref{stablemineq1} implies that
	\begin{equation} \label{stablemineproof1}
	\PP{\eigmax{C_n-C}>\frac{\epsilon}{3}} \leq \frac{\delta}{2}.
	\end{equation}
	In addition, by Lemma \ref{crossproduct}, \eqref{stablemineq2} implies that
	\begin{equation} \label{stablemineproof2}
	\PP{\eigmax{U_n}>\frac{\epsilon}{3}} \leq \frac{\delta}{2}.
	\end{equation} 
	Finally, using Lemma \ref{statenorm}, by \eqref{stablemineq3} we get
	\begin{equation} \label{stablemineproof3}
	\frac{1}{n}\left(\Mnorm{D}{2}^2 +1\right) \left( \norm{x(0)}{2}^2 + \norm{x(n)}{2}^2\right) \leq \frac{\epsilon}{3}.
	\end{equation}
	Putting \eqref{stablemineproof1}, \eqref{stablemineproof2}, and \eqref{stablemineproof3} together, on $\event{W}$, with probability at least $1-\delta$, it holds that $\eigmin{E_n} \geq \eigmin{C}-\epsilon$. Therefore, since \eqref{samplelyapeq} implies that $\eigmin{V_{n+1}} \geq n \eigmin{E_n}$, we get the desired result.\ifarxiv Moreover, since $\eigmax{E_n} \leq \eigmax{C}+ \epsilon$, for $2\epsilon = \eigmin{C}$, with probability at least $1-2 \delta$ we have 
	\begin{eqnarray}
	\eigmax{\frac{1}{n}V_{n+1}} &\leq& \sum\limits_{i=0}^{\infty} \eigmax{D^i E_n {D'}^i}
	\leq \frac{3}{2}\eigmax{C} \MJordanconst{}{D'}^2. \label{samplecoveigmax}
	\end{eqnarray}
	\fi When $n \to \infty$, the conditions hold for arbitrary positive values of $\epsilon, \delta$. Thus, we have $\eigmax{E_n-C} \to 0$, which according to \eqref{samplelyapeq} implies the desired result.
\end{proof}
The following corollary provides a high probability confidence set for $D$, which will be used later in Algorithm \ref{Gregretalgo}. Using the bounds $\noisemax{n}{\delta}, \predbound{n}{\delta}$ introduced in \eqref{noisemaxdeff}, \eqref{predbounddeff}, define the prediction bound $\prediction{n}{\delta}$ according to:
\begin{equation} \label{predictiondeff}
\prediction{n}{\delta} = \frac{16 np}{\left(n-1\right)\eigmin{C}} \predbound{n}{\delta}^2 \noisemax{n}{\delta}^2 \log \left(\frac{2p}{\delta}\right).
\end{equation}
\begin{cor} [Prediction bound] \label{stableprediction}
	Define $\prediction{n}{\delta}$ by \eqref{predictiondeff}. Then, $n \geq \samplesize{\ref{stablemin}}{\eigmin{C}/2}{\delta}+1$ implies that 
	\begin{equation*}
	\PP{\Mnorm{ {V_n}^{1/2} \left(\esttrans{n}-D\right)'}{2}^2 > \prediction{n}{\delta}} \leq 3 \delta.
	\end{equation*} 
\end{cor}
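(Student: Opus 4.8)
The plan is to reduce the claim to two ingredients: the eigenvalue lower bound of Theorem~\ref{stablemin} for the denominator $V_n$, and a matrix martingale concentration inequality for the numerator. First I would make the estimation error explicit. Substituting $x(t+1)=Dx(t)+w(t+1)$ into the closed form of $\esttrans{n}$ gives $\esttrans{n}=\left(DV_n+S_n'\right)V_n^{-1}=D+S_n'V_n^{-1}$, where $S_n=\sum_{t=0}^{n-1}x(t)w(t+1)'$. Hence $\left(\esttrans{n}-D\right)'=V_n^{-1}S_n$, so that
\begin{equation*}
{V_n}^{1/2}\left(\esttrans{n}-D\right)'=V_n^{-1/2}S_n,
\end{equation*}
and by submultiplicativity of the operator norm $\Mnorm{V_n^{-1/2}S_n}{2}^2\le \eigmin{V_n}^{-1}\Mnorm{S_n}{2}^2$, since $\Mnorm{V_n^{-1/2}}{2}=\eigmin{V_n}^{-1/2}$. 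It therefore suffices to lower bound $\eigmin{V_n}$ and upper bound $\Mnorm{S_n}{2}$, each on the event $\event{W}$.

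For the denominator, since $n\ge \samplesize{\ref{stablemin}}{\eigmin{C}/2}{\delta}+1$, applying Theorem~\ref{stablemin} with $n-1$ in place of $n$ and $\epsilon=\eigmin{C}/2$ shows that, on $\event{W}$, with probability at least $1-\delta$ one has $\eigmin{V_n}\ge (n-1)\eigmin{C}/2$, i.e. $\eigmin{V_n}^{-1}\le 2/\left[(n-1)\eigmin{C}\right]$; here I would use the on-$\event{W}$ estimate established inside the proof of Theorem~\ref{stablemin} rather than its unconditional $2\delta$ version.

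For the numerator I would control $\Mnorm{S_n}{2}$ via the Hermitian dilation. Setting $Y_t=\begin{bmatrix}0 & x(t)w(t+1)'\\ w(t+1)x(t)' & 0\end{bmatrix}$, the partial sums form a matrix martingale difference sequence with respect to $\sigma\left(w(1),\dots,w(t)\right)$ because $\E{w(t+1)}=0$, and $\eigmax{\sum_{t=0}^{n-1}Y_t}=\Mnorm{S_n}{2}$. On $\event{W}$, Lemma~\ref{statenorm} gives $\norm{x(t)}{2}\le \predbound{n}{\delta}$ and $\norm{w(t+1)}{2}\le \sqrt{p}\,\noisemax{n}{\delta}$, whence $Y_t^2\preceq p\,\predbound{n}{\delta}^2\noisemax{n}{\delta}^2 I_{2p}$. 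Feeding the variance proxy $\sigma^2=np\,\predbound{n}{\delta}^2\noisemax{n}{\delta}^2$ and the dimension $2p$ into the matrix Azuma inequality~\cite{tropp2012user} yields $\PP{\Mnorm{S_n}{2}\ge t}\le 2p\exp\left(-t^2/(8\sigma^2)\right)$; choosing $t^2=8\sigma^2\log(2p/\delta)$ gives, on $\event{W}$, $\Mnorm{S_n}{2}^2\le 8np\,\predbound{n}{\delta}^2\noisemax{n}{\delta}^2\log(2p/\delta)$ with probability at least $1-\delta$.

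Multiplying the denominator and numerator bounds reproduces exactly $\prediction{n}{\delta}=\frac{16np}{(n-1)\eigmin{C}}\predbound{n}{\delta}^2\noisemax{n}{\delta}^2\log(2p/\delta)$. For the bookkeeping I would intersect with $\event{W}$, which fails with probability at most $\delta$ by Lemma~\ref{noisebound}: the denominator and numerator steps each fail with probability at most $\delta$ on $\event{W}$, so a union bound over the three events yields total failure probability at most $3\delta$. The main obstacle is the numerator step: the increment bound $Y_t^2\preceq p\,\predbound{n}{\delta}^2\noisemax{n}{\delta}^2 I_{2p}$ holds only on $\event{W}$ rather than almost surely, so a direct appeal to matrix Azuma is not licensed. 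I would resolve this by stopping the martingale at $\tau=\inf\{t:\norm{w(t)}{\infty}>\noisemax{n}{\delta}\}$ (equivalently, truncating the increments): this preserves the martingale-difference property, makes the squared-increment bound hold almost surely, and agrees with $S_n$ on $\event{W}$ because $\tau>n$ there. Aligning the dilation, the variance proxy, and this truncation so that the stated constant emerges is the delicate part of the argument.
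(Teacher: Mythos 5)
Your proposal is correct and follows essentially the same route as the paper's proof: the same reduction $\Mnorm{V_n^{1/2}(\esttrans{n}-D)'}{2}^2 \leq \eigmin{V_n}^{-1}\Mnorm{U_n}{2}^2$, the same application of Theorem~\ref{stablemin} with $\epsilon = \eigmin{C}/2$ for the denominator, the same Hermitian dilation plus matrix Azuma argument with variance proxy $np\,\predbound{n}{\delta}^2\noisemax{n}{\delta}^2$ for the numerator, and the same union bound over three events yielding $3\delta$. Your closing remark about stopping the martingale so that the increment bound holds almost surely rather than only on $\event{W}$ addresses a conditioning subtlety that the paper's own proof treats informally; your fix is the standard and correct way to make that step rigorous.
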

\begin{proof}[\bf Proof]
	\edit{First, since $n \geq \samplesize{\ref{stablemin}}{\eigmin{C}/2}{\delta}+1$, similar to the proof of Theorem \ref{stablemin}, on the event $\event{W}$ defined in \eqref{eventdeff}, with probability at least $1-\delta$, we have $\eigmin{V_n} \geq \eigmin{C} \left(n-1\right)/2$.
	Then, as long as $V_n$ is nonsingular, one can write $\esttrans{n}-D= \left(\sum\limits_{t=0}^{n-1}w(t+1)x(t)'\right) V_n^{-1}$, which yields $\left(\esttrans{n}-D\right) V_n \left(\esttrans{n}-D\right)' = U_n' V_n^{-1}U_n$, where $U_n=\sum\limits_{t=0}^{n-1}x(t)w(t+1)'$. Therefore, 
	\begin{equation} \label{stablepredictioneq1}
	\Mnorm{\left(\esttrans{n}-D\right) V_n \left(\esttrans{n}-D\right)'}{2} \leq  \frac{\Mnorm{U_n}{2}^2}{\eigmin{V_n}}.
	\end{equation}
	To proceed, for arbitrary matrix $H \in \R^{k \times \ell}$, define the dilation 
	\begin{equation*}
	\symmetrizer{H} = \begin{bmatrix} 0_{k \times k} & H \\ H' & 0_{\ell \times \ell} \end{bmatrix} \in \R^{(k+\ell) \times (k+\ell)}.
	\end{equation*}
	A well known fact states that the equality $\Mnorm{H}{2} = \eigmax{\symmetrizer{H}}$ holds \cite{tropp2012user}. So, letting $Z_t=x(t)w(t+1)'$, apply the following random matrix concentration inequality to $X_t=\symmetrizer{Z_t} \in \R^{2p \times 2p}$. 
	\begin{lem} \cite{tropp2012user} \label{MAzuma}
		Let $\left\{ X_i \right\}_{i=1}^n$ be a martingale difference sequence of symmetric $p \times p$ matrices adapted to the filtration $\left\{\mathcal{F}_i \right\}_{i=0}^n$. Assume for fixed symmetric matrices $\left\{ M_i \right\}_{i=1}^n$, all matrices $M_i^2-X_i^2$ are positive semidefinite. Then, letting $\sigma^2 = \eigmax{\sum\limits_{i=1}^{n}M_i^2}$, for all $y \geq 0$ we have
		\begin{equation*}
		\PP{\eigmax{\sum\limits_{i=1}^{n}X_i} \geq y} \leq 2p \!\ \exp \left(-\frac{y^2}{8\sigma^2}\right).
		\end{equation*}
	\end{lem}
	Since
	\begin{equation*}
	{X_t}^2 = \begin{bmatrix} \norm{w(t+1)}{2}^2 x(t)x(t)' & 0_{p\times p} \\ 0_{p \times p} & \norm{x(t)}{2}^2 w(t+1)w(t+1)' \end{bmatrix},
	\end{equation*}
	by Lemma \ref{noisebound} and Lemma \ref{statenorm}, all matrices ${M_t}^2-{X_t}^2$ are positive semidefinite on the event $\event{W}$ defined in \eqref{eventdeff}, with $M_t= \symmetrizer{p^{1/2} \noisemax{n}{\delta} \predbound{n}{\delta} I_p}$. By $\sigma^2= np \noisemax{n}{\delta}^2 \predbound{n}{\delta}^2$, letting $y=8^{1/2} \sigma \log^{1/2}\left(\frac{2p}{\delta}\right)$, Lemma \ref{MAzuma} implies $\PP{\Mnorm{U_n}{2}>y} = \PP{\eigmax{\symmetrizer{U_n}}>y} \leq \delta$. Plugging in \eqref{stablepredictioneq1}, we get the desired result.}
\end{proof}


\section{Design of Adaptive Policy} \label{Adaptive Control Algorithms}

In this section, we present an algorithm for adaptive regulation of LQ systems. 
When applying the following algorithm, we assume that a stabilizing set is provided. Construction of such a set with an arbitrary high probability guarantee is addressed in the literature \cite{faradonbeh2018stabilization}. It is established that the proposed adaptive stabilization procedure returns a stabilizing set in finite time. 
Nevertheless, if such a set is not available, the operator can apply the proposed method of random linear feedbacks \cite{faradonbeh2018stabilization} in order to stabilize the system before running the following adaptive policy.

In the episodic algorithm below, estimation will be reinforced at the end of every episode. Indeed, the algorithm is based on a sequence of confidence sets, which are constructed according to Corollary \ref{stableprediction}. This sequence will be tightened at the end of every episode; i.e. the provided confidence sets become more and more accurate. According to this sequence, the adaptive linear feedback will be updated after every episode. After explaining the algorithm, we present a high probability regret bound.

First, we provide a high level explanation of the algorithm. Starting with the stabilizing set $\paraspace{0}$, we select a parameter $\optpara{1} \in \paraspace{0}$ based on OFU principle; i.e. $\optpara{1}$ is a minimizer of the optimal average cost over the corresponding confidence set (see \eqref{algo2eq1}). 

Then, assuming $\optpara{1}$ is the true parameter the system evolves according to, during the first episode the algorithm applies the optimal linear feedback $\Lmatrix{\optpara{1}}$. Once the observations during the first episode are collected, they are used to improve the accuracy of the high probability confidence set. Therefore, $\paraspace{0}$ is tightened to $\paraspace{1}$, and the second episode starts by selecting $\optpara{2}$, iterating the above procedure, and so on. The lengths of the episodes will be increasing, to make every confidence set significantly more accurate than all previous ones.

The intuition behind proficiency of the OFU principle is as follows. Applying a linear feedback $L$, the closed-loop transition matrix is $A_0+B_0L=\para_0 \widetilde{L}$, where $\widetilde{L}= \left[I_p, L'\right]'$. Importantly, the observed sequence of state vectors accurately identifies the closed-loop matrix $\para_0 \widetilde{L}$. However, an accurate estimation of $\para_0 \widetilde{L}$ does not lead to that of $\para_0$. Therefore, $\para_0$ is not guaranteed to be effectively approximable, regardless of the accuracy in the approximation of $\para_0 \widetilde{L}$.

Nevertheless, one has to focus on finding accurate approximations of the feedback matrix $\Lmatrix{\para_0}$, in order to design an effective adaptive policy for minimizing the average cost. Specifically, as long as $\para_1$ is available satisfying $\Lmatrix{\para_1}= \Lmatrix{\para_0}$, one can apply an optimal linear feedback $\Lmatrix{\para_1}$, no matter how large $\Mnorm{\para_1 - \para_0}{2}$ is. In general, estimation of such a $\para_1$ is not possible. Yet, an optimistic approximation in addition to exact knowledge of the closed-loop dynamics lead to an optimal linear feedback, thanks to the OFU principle. 
\begin{lem} \label{OFU}
	If $\optcost{\para_1} \leq \optcost{\para_0}$, and $\para_1 \extendedLmatrix{\para_1}=\para_0 \extendedLmatrix{\para_1}$, then $\Lmatrix{\para_1}$ is optimal for the system $\para_0$: $\Lmatrix{\para_0}=\Lmatrix{\para_1}$.
\end{lem}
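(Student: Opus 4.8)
The plan is to translate the two algebraic hypotheses into a single statement comparing \emph{average costs}, and then invoke uniqueness of the optimal linear feedback for $\para_0$. Write $L_1 = \Lmatrix{\para_1}$ and $L_0 = \Lmatrix{\para_0}$. First I would set $D = \para_0 \extendedLmatrix{\para_1} = A_0 + B_0 L_1$; the hypothesis $\para_1 \extendedLmatrix{\para_1} = \para_0 \extendedLmatrix{\para_1}$ says this also equals $A_1 + B_1 L_1$, so the feedback $L_1$ drives \emph{both} systems $\para_0$ and $\para_1$ through the identical closed-loop matrix $D$. Since $L_1$ is a stabilizer for $\para_1$ (the converse direction of Lemma \ref{stabilizable}), $D$ is stable, hence $L_1$ stabilizes $\para_0$ as well and its average cost on $\para_0$ is finite.

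The crux of the argument is that the average cost of a \emph{linear} feedback $L$ on a system with dynamics matrices $A,B$ depends on $L$ only through the pair $\left(A+BL,\, Q+L'RL\right)$ (with the fixed covariance $C$): it equals $\tr{K_L C}$, where $K_L$ is the unique solution of the Lyapunov equation $K_L = Q + L'RL + (A+BL)'K_L(A+BL)$. I would then observe that running $L_1$ on $\para_0$ and on $\para_1$ produces the \emph{same} pair $\left(D,\, Q + L_1'R L_1\right)$, hence the identical Lyapunov equation and the identical average cost. By optimality of $L_1$ for $\para_1$ (Lemma \ref{stabilizable}) this common value is $\optcost{\para_1}$, so running $L_1$ on the true system $\para_0$ costs exactly $\optcost{\para_1} \leq \optcost{\para_0}$ by hypothesis. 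Since $\optcost{\para_0}$ is the minimum over \emph{all} admissible policies, the reverse inequality is automatic; therefore $L_1$ attains cost $\optcost{\para_0}$ on $\para_0$, i.e. $L_1$ is an optimal feedback for $\para_0$.

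The last step, which I expect to be the only real obstacle, is upgrading ``$L_1$ attains the optimal cost'' to the pointwise identity $L_1 = L_0$. Here I would use the standard completion-of-squares identity for the Riccati solution $K_0 = \Kmatrix{\para_0}$: with $S = R + B_0'K_0 B_0 \succ 0$ and $D_L = A_0 + B_0 L$, subtracting the Lyapunov equation for $K_L$ from the identity $Q + L'RL + D_L'K_0 D_L = K_0 + (L-L_0)'S(L-L_0)$ yields the recursion $(K_L - K_0) - D_L'(K_L - K_0)D_L = (L-L_0)'S(L-L_0)$, so $K_L - K_0 = \sum_{i \geq 0}(D_L')^i (L-L_0)'S(L-L_0)D_L^i \succeq 0$, with equality iff $L = L_0$. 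Applying this with $L = L_1$: the equality of average costs forces $\tr{(K_{L_1} - K_0)C} = 0$, and since $C \succ 0$ with $K_{L_1} - K_0 \succeq 0$ this gives $K_{L_1} = K_0$; the sum-of-squares representation (using $S \succ 0$) then forces $L_1 = L_0$, which is the claim. The delicate point is precisely this final uniqueness argument, relying on positive definiteness of both $C$ and $S$; the preceding reduction to average costs is routine once one recognizes that a linear feedback's cost is determined by its closed-loop matrix.
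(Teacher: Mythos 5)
Your proposal is correct, and its core coincides with the paper's own argument: the paper likewise observes that the hypothesis $\para_1 \extendedLmatrix{\para_1}=\para_0 \extendedLmatrix{\para_1}$ makes the closed-loop matrix $D$ and the stage-cost weight $P=Q+\Lmatrix{\para_1}'R\Lmatrix{\para_1}$ identical for the two systems, computes $\avecost{\para_0}{\policy_1}=\tr{C\sum_{i\geq 0}{D'}^iPD^i}=\optcost{\para_1}$, and closes the sandwich $\optcost{\para_0}\geq\optcost{\para_1}=\avecost{\para_0}{\policy_1}\geq\optcost{\para_0}$. (The paper phrases the cost of the feedback via the explicit series rather than your Lyapunov-equation formulation, and it justifies stability of $D$ by noting that $\optcost{\para_1}\leq\optcost{\para_0}<\infty$ forces $\para_1$ to be stabilizable; your version of that step is slightly terser but equivalent.) Where you genuinely go beyond the paper is the last stage: the paper stops at ``$\Lmatrix{\para_1}$ is an optimal policy for $\para_0$'' and treats the matrix identity $\Lmatrix{\para_0}=\Lmatrix{\para_1}$ as following from uniqueness, without spelling out why an optimal \emph{linear} feedback must coincide with the Riccati feedback. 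Your completion-of-squares recursion $(K_L-K_0)-D_L'(K_L-K_0)D_L=(L-L_0)'S(L-L_0)$, combined with $C\succ 0$ to get $K_{L_1}=K_0$ and $S=R+B_0'K_0B_0\succ 0$ to extract $L_1=L_0$, supplies exactly the missing uniqueness argument; it is correct and makes the lemma's stated conclusion fully rigorous rather than implicit.
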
 
In other words, applying linear feedback $\Lmatrix{\para_{1}}$ which is designed according to an optimistically selected parameter $\para_1$, as long as the closed-loop matrix $\para_0 \extendedLmatrix{\para_1}$ is exactly identified, the optimal linear feedback is automatically provided. Recall that the lengths of the episodes are growing so that the estimation of the closed-loop matrix becomes more precise at the end of every episode. Thus, the approximation $\para_1 \extendedLmatrix{\para_1} \approx \para_0 \extendedLmatrix{\para_1}$ is becoming more and more accurate. Rigorous analysis of the discussion above, leads to the high probability near-optimal regret bound of Theorem \ref{Gregret}. 

\edit{Algorithm \ref{Gregretalgo} takes the stabilizing set $\paraspace{0}$, the failure probability $\firstcoeff\delta$, and the reinforcement rate $\rrate >1$ as inputs.} Indeed, $\paraspace{0}$ is a {bounded} stabilizing set such that for every $\para \in \paraspace{0}$, the system will be stable if the optimal linear feedback of $\para$ is applied; i.e. $\eigmax{\para_0 \extendedLmatrix{\para}} < 1$. As mentioned before, an algorithmic procedure to obtain a bounded stabilizing set in finite time is available in the literature \cite{faradonbeh2018stabilization}. Furthermore, $\firstcoeff\delta >0$ is the highest probability that Algorithm \ref{Gregretalgo} fails to adaptively regulate the system such that the regret will be nearly optimal (see Theorem \ref{Gregret}). The reinforcement rate $\rrate $ determines the growth rate of the lengths of the time intervals (episodes) a specific feedback is applied until being updated (see \eqref{algo2eq2}). 
\begin{algorithm}
	\caption{{\bf: Adaptive Regulation}} \label{Gregretalgo}
	\begin{algorithmic}
		\State {\bf Inputs}: $\paraspace{0} \subset \R^{p \times q}$, $\firstcoeff\delta>0$, $\rrate >1$
		\State Let $\tau_0=0$
		\For{$i=1,2,\cdots$}
		\State Define $\optpara{i}$, $\tau_i$ according to \eqref{algo2eq1}, \eqref{algo2eq2}, respectively
		\While{$t < \tau_i$}
		\State Apply control feedback $u(t)=\Lmatrix{\optpara{i}} x(t)$
		\EndWhile
		\State Find the estimate $\widehat{D}^{(i)}$ given in \eqref{algo2eq4}
		\State Using $V^{(i)}$ in \eqref{algo2eq5}, construct $\tempparaspace{i}$ according to \eqref{algo2eq6}
		\State Update $\paraspace{i}$ by \eqref{algo2eq7}
		\EndFor
	\end{algorithmic}
\end{algorithm}

The algorithm provides an adaptive policy as follows. For $i=1,2,\cdots$, at the beginning of the $i$-the episode, we apply linear feedback $u(t)=\Lmatrix{\optpara{i}} x(t)$, where
\begin{equation} \label{algo2eq1}
\optpara{i} \in \arg\min \limits_{\para \in \paraspace{i-1}} \optcost{\para}.
\end{equation}
Indeed, based on OFU principle, at the beginning of every episode, {\em the most optimistic} parameter amongst all we are uncertain about is being selected. The length of episode $i$, which is the time period we apply the adaptive control policy $u(t)=\Lmatrix{\optpara{i}} x(t)$, is designed according to the following equation. Letting $\tau_0=0$, we update the control policy at the end of episode $i$, i.e. at the time $t=\tau_i$, defined according to
\begin{equation} \label{algo2eq2}
\tau_i = \tau_{i-1} +  \rrate^{i/q}  \samplesize{\ref{stablemin}}{\frac{\eigmin{C}}{2}}{\frac{\delta}{i^2}}+\rrate^{i/q} ,
\end{equation}
where $\samplesize{\ref{stablemin}}{\cdot}{\cdot}$ is defined by \eqref{stablemineq1}, \eqref{stablemineq2}, and \eqref{stablemineq3}. After the $i$-th episode, we estimate the closed-loop transition matrix $\para_0 \extendedLmatrix{\optpara{i}}$ by the following least-squares estimator:
\begin{eqnarray}
\widehat{D}^{(i)} = \arg\min\limits_{M \in \R^{p\times p}} \sum\limits_{t=\tau_{i-1}}^{\tau_i-1} \norm{x(t+1)-Mx(t)}{2}^2. \label{algo2eq4}
\end{eqnarray}
Letting $V^{(i)}$ be the empirical covariance matrix of episode $i$,
\begin{equation}  \label{algo2eq5}
V^{(i)} = \sum\limits_{t=\lceil \tau_{i-1} \rceil }^{ \lceil \tau_i \rceil-1} x(t)x(t)',
\end{equation}
define the high probability confidence set
\ifarxiv
\begin{eqnarray} 
\tempparaspace{i} = \Bigg\{ \para \in \R^{p \times q} &:& \Mnorm{ {V^{(i)}}^{\frac{1}{2}} \left(\para \extendedLmatrix{\optpara{i}}-\widehat{D}^{(i)}\right)' }{2}^2
\leq \prediction{\tau_i-\tau_{i-1}}{\frac{\delta}{i^2}} \Bigg\}, \label{algo2eq6}
\end{eqnarray}
\else
\begin{eqnarray} 
\tempparaspace{i} = \Bigg\{ \para \in \R^{p \times q} : \Mnorm{ {V^{(i)}}^{1/2}  \left(\para \extendedLmatrix{\optpara{i}}-\widehat{D}^{(i)}\right)' }{2}^2 \notag \\
\leq \prediction{\tau_i-\tau_{i-1}}{\frac{\delta}{i^2}} \Bigg\}, \label{algo2eq6}
\end{eqnarray}\fi 
where $\prediction{n}{\delta}$ is defined in \eqref{predictiondeff}. Note that according to Corollary \ref{stableprediction}, $\PP{\para_0 \in \tempparaspace{i}} \geq 1-3 \delta i^{-2}$. Then, at the end of episode $i$, the confidence set $\paraspace{i-1}$ will be updated to
\begin{equation} \label{algo2eq7}
\paraspace{i} = \paraspace{i-1} \cap \tempparaspace{i},
\end{equation}
and episode $i+1$ starts, finding $\optpara{i+1}$ by \eqref{algo2eq1}, and then iterating all steps described above. 
\begin{remark} \label{Gregretalgoremark}
	The choice of $\optpara{i}$ does not need to be as extreme as \eqref{algo2eq1} \cite{abbasi2011regret}. In fact, it suffices to satisfy $\optcost{\optpara{i}} \leq \left(\tau_i-\tau_{i-1}\right)^{-1/2}+ \inf \limits_{\para \in \paraspace{i-1}} \optcost{\para}$.
\end{remark}
The following result states that performance of the above adaptive control algorithm is optimal, apart from a logarithmic factor. {Theorem \ref{Gregret} also provides the effect of the degree of heaviness of the noise distribution (denoted by $\tailexp$ in Assumption \ref{tailcondition}) on the regret.} Compared to $\order{\cdot}$, the notation $\orderlog{\cdot}$ used below, hides the logarithmic factors. 
\begin{thm}[Regret bound] \label{Gregret}
	For bounded $\paraspace{0}$, with probability at least $1-\firstcoeff\delta$, the regret of Algorithm \ref{Gregretalgo} satisfies:
	\begin{equation*}
	\regret{T} \leq \orderlog{T^{1/2} \left( - \log \delta \right)^{1/2+2/\tailexp}}.
	\end{equation*}
\end{thm}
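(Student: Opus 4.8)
The plan is to condition on a single high-probability good event, reduce the benchmark via the OFU inequality, and then decompose the per-episode regret into a telescoping part, martingale-difference parts, and a statistical mismatch part, controlling each while summing over the geometrically growing episodes. First I would fix the good event $\event{G}$ as the intersection, over all episodes $i\ge 1$, of: the noise bound $\event{W}$ of \eqref{eventdeff} (Lemma \ref{noisebound}); the persistent-excitation event $\eigmin{V^{(i)}}\ge (\tau_i-\tau_{i-1})\eigmin{C}/2$, which holds by Theorem \ref{stablemin} with $\epsilon=\eigmin{C}/2$ since \eqref{algo2eq2} forces $\tau_i-\tau_{i-1}\ge \samplesize{\ref{stablemin}}{\eigmin{C}/2}{\delta/i^2}$; the confidence-set validity $\para_0\in\tempparaspace{i}$ of Corollary \ref{stableprediction}; and the martingale-concentration events of the next step. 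Combining the per-episode failure probabilities $3\delta i^{-2}$ with $\sum_i i^{-2}=\pi^2/6$ and a union bound gives $\PP{\event{G}}\ge 1-\firstcoeff\delta$. On $\event{G}$, since $\optpara{i}$ minimizes $\optcost{\cdot}$ over $\paraspace{i-1}\ni\para_0$, we have $\optcost{\optpara{i}}\le\optcost{\para_0}$, so by Lemma \ref{stabilizable} the benchmark in \eqref{regretdeff} satisfies $\optcost{\para_0}\ge\optcost{\optpara{i}}=\tr{\Kmatrix{\optpara{i}}C}$; I can therefore replace $\optcost{\para_0}$ by $\tr{\Kmatrix{\optpara{i}}C}$ throughout episode $i$.

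Writing $K^{(i)}=\Kmatrix{\optpara{i}}$, $\widetilde{D}^{(i)}=\optpara{i}\extendedLmatrix{\optpara{i}}$ (optimistic closed loop), and $D^{(i)}=\para_0\extendedLmatrix{\optpara{i}}$ (true closed loop under the applied feedback $u(t)=\Lmatrix{\optpara{i}}x(t)$), the Riccati relations \eqref{ricatti2}--\eqref{ricatti1} give $Q+\Lmatrix{\optpara{i}}'R\Lmatrix{\optpara{i}}=K^{(i)}-{\widetilde{D}^{(i)}}'K^{(i)}\widetilde{D}^{(i)}$, hence $\instantcost{t}=x(t)'K^{(i)}x(t)-x(t)'{\widetilde{D}^{(i)}}'K^{(i)}\widetilde{D}^{(i)}x(t)$. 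Substituting the true dynamics $x(t+1)=D^{(i)}x(t)+w(t+1)$ and subtracting $\tr{K^{(i)}C}$, each term $\instantcost{t}-\tr{K^{(i)}C}$ splits into the telescoping part $x(t)'K^{(i)}x(t)-x(t+1)'K^{(i)}x(t+1)$, the martingale differences $2x(t)'{D^{(i)}}'K^{(i)}w(t+1)$ and $w(t+1)'K^{(i)}w(t+1)-\tr{K^{(i)}C}$, and the statistical mismatch $x(t)'\big({D^{(i)}}'K^{(i)}D^{(i)}-{\widetilde{D}^{(i)}}'K^{(i)}\widetilde{D}^{(i)}\big)x(t)$.

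I would then bound the three groups. The telescoping sum collapses within each episode to boundary values $x(\tau_{i-1})'K^{(i)}x(\tau_{i-1})-x(\tau_i)'K^{(i)}x(\tau_i)$; chaining across the $\episodecount{T}=\order{\log T}$ episodes (their count is logarithmic because \eqref{algo2eq2} grows the lengths geometrically at rate $\rrate^{1/q}$), and using boundedness of $\paraspace{0}$, continuity of $\Kmatrix{\cdot}$, and the uniform state bound $\norm{x(t)}{2}\le\predbound{\cdot}{\cdot}$ of Lemma \ref{statenorm}, this contributes only $\orderlog{1}$. The two martingale sums have increments uniformly bounded on $\event{G}$ by polylogarithmic multiples of $\predbound{\cdot}{\cdot}\noisemax{\cdot}{\cdot}=\orderlog{(-\log\delta)^{2/\tailexp}}$, so a Freedman-type inequality with quadratic variation $\order{T}$ yields at most $\orderlog{T^{1/2}(-\log\delta)^{1/2+2/\tailexp}}$. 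For the statistical term I bound $|x(t)'(\cdots)x(t)|\le\Mnorm{K^{(i)}}{2}\big(\Mnorm{D^{(i)}}{2}+\Mnorm{\widetilde{D}^{(i)}}{2}\big)\norm{\Delta^{(i)}x(t)}{2}\norm{x(t)}{2}$ with $\Delta^{(i)}=(\para_0-\optpara{i})\extendedLmatrix{\optpara{i}}$, then use Cauchy--Schwarz over $t$ together with $\sum_t\norm{\Delta^{(i)}x(t)}{2}^2=\tr{\Delta^{(i)}V^{(i)}{\Delta^{(i)}}'}\le p\,\Mnorm{{V^{(i)}}^{1/2}{\Delta^{(i)}}'}{2}^2$ and $\sum_t\norm{x(t)}{2}^2=\tr{V^{(i)}}=\order{\tau_i-\tau_{i-1}}$; if $\Mnorm{{V^{(i)}}^{1/2}{\Delta^{(i)}}'}{2}^2=\order{\prediction{\tau_i-\tau_{i-1}}{\delta/i^2}}$, the per-episode statistical regret is $\order{(\prediction{\cdot}{\cdot}(\tau_i-\tau_{i-1}))^{1/2}}$, and summing the geometric lengths produces the dominant $\orderlog{T^{1/2}(-\log\delta)^{1/2+2/\tailexp}}$, matching the claim.

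The hard part is precisely the mismatch bound $\Mnorm{{V^{(i)}}^{1/2}{\Delta^{(i)}}'}{2}^2=\order{\prediction{\tau_i-\tau_{i-1}}{\delta/i^2}}$. The estimator $\widehat{D}^{(i)}$ of \eqref{algo2eq4} identifies $D^{(i)}=\para_0\extendedLmatrix{\optpara{i}}$ to within $\prediction{\cdot}{\cdot}^{1/2}$ in the $V^{(i)}$-norm (this is the $\para_0$ half, supplied by Corollary \ref{stableprediction}), but $\optpara{i}$ is selected from $\paraspace{i-1}$ \emph{before} episode $i$ and is only constrained by earlier confidence sets $\tempparaspace{j}$, $j<i$, which pin down closed-loop combinations $\para\extendedLmatrix{\optpara{j}}$ rather than the currently excited direction $\extendedLmatrix{\optpara{i}}$. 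Reconciling this feedback-direction mismatch — controlling the error of $\widetilde{D}^{(i)}$ in the direction actually being excited — is where the full-rank persistence of excitation $\eigmin{V^{(i)}}\gtrsim\tau_i-\tau_{i-1}$ (Theorem \ref{stablemin}, valid in \emph{all} state directions because $C$ has full rank) must be combined with the OFU guarantee of Lemma \ref{OFU} and the geometric schedule. A second, orthogonal difficulty is that Assumption \ref{tailcondition} supplies no moment generating function, so every concentration step — the martingale bounds above and the inputs from Lemmas \ref{noisebound}--\ref{crossproduct} and Corollary \ref{stableprediction} — must pass through the sub-Weibull machinery rather than a classical sub-Gaussian Chernoff argument; this is exactly what injects the extra $(-\log\delta)^{2/\tailexp}$ factor beyond the $(-\log\delta)^{1/2}$ already shown unavoidable by Lemma \ref{minregret}.
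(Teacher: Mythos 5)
Your decomposition is essentially the paper's: condition on a good event of probability $1-\firstcoeff\delta$, use OFU to discard the benchmark mismatch (the paper's $\term{1}\le 0$), and split $\instantcost{t}-\tr{\Kmatrix{\optpara{i}}C}$ via the Bellman/Riccati identity into telescoping, martingale-difference, and statistical-mismatch pieces (the paper's $\term{2}$, $\term{3}$, $\term{4}$). Your treatment of the first three pieces matches the paper's Lemmas \ref{Gprooflem2}, \ref{Gprooflem3}, and \ref{Gprooflem6} and is fine.

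The problem is the step you yourself flag as "the hard part" and then leave conditional: the bound $\Mnorm{{V^{(i)}}^{1/2}\left((\para_0-\optpara{i})\extendedLmatrix{\optpara{i}}\right)'}{2}^2=\order{\prediction{\tau_i-\tau_{i-1}}{\delta/i^2}}$ is never established, and it does \emph{not} follow directly from the confidence sets, for exactly the reason you identify: $\optpara{i}\in\paraspace{i-1}=\bigcap_{j<i}\tempparaspace{j}$ only controls $(\para_0-\optpara{i})\extendedLmatrix{\optpara{j}}$ in the $V^{(j)}$-norm for $j<i$, not $(\para_0-\optpara{i})\extendedLmatrix{\optpara{i}}$ in the $V^{(i)}$-norm. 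Asserting that persistent excitation, Lemma \ref{OFU}, and the geometric schedule "must be combined" is a description of the difficulty, not a proof. The paper's resolution is a specific mechanism (Lemmas \ref{Gprooflem4} and \ref{Gprooflem5}): set $U_j=\extendedLmatrix{\optpara{j}}V^{(j)}\extendedLmatrix{\optpara{j}}'$, bound $\Mnorm{(\optpara{i}-\para_0)\sum_{j<i}U_j(\optpara{i}-\para_0)'}{2}$ by $\order{\episodecount{T}\,\prediction{T}{\delta/\episodecount{T}^2}}$ using that both $\optpara{i}$ and $\para_0$ lie in every earlier $\tempparaspace{j}$, and then pass from $\sum_{j<i}U_j$ to $\sum_{j\le i}U_j$ via the ratio $s_i(\para)$ whose Cesaro mean is shown to be bounded. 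That last step is itself nontrivial: it rests on the determinant-ratio inequality of Abbasi-Yadkori and Szepesv\'ari together with a convergence argument $\Lmatrix{\optpara{i}}\to\Lmatrix{\para_0}$, which is where Lemma \ref{OFU}, the uniqueness assertion of Lemma \ref{stabilizable}, and the second part of Theorem \ref{stablemin} actually enter the proof. Without this chain your dominant $T^{1/2}$ term is unsupported, so the argument is incomplete at its central point.
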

\begin{proof}[\bf {Proof}]
	The stabilizing set $\paraspace{0}$ is bounded:
	\begin{equation} \label{boundedT}
	\ssconstant_{1} = \sup\limits_{\para \in \paraspace{0}} \Mnorm{\para'}{2} < \infty.
	\end{equation}
	Suppose that for $t=1,2,\cdots$, the parameter $\para_t$ is being used to design the adaptive linear feedback $u(t)=\Lmatrix{\para_t}x(t)$. So, during every episode, $\para_t$ does not change, and for $\tau_{i-1} \leq t < \tau_i$ we have $\para_t= \optpara{i}$. 
	
	Letting $\mathcal{F}_t = \sigma \left(w(1), \cdots, w(t)\right)$, the infinite horizon dynamic programming equations \cite{bertsekas1995dynamic} are
	\ifarxiv
	\begin{eqnarray*}
		\optcost{\para_t} + x(t)' \Kmatrix{\para_t} x(t) = x(t)'Q x(t) + u(t)'Ru(t)
		+ \E{y(t+1)' \Kmatrix{\para_t} y(t+1) \Big | \mathcal{F}_t},
	\end{eqnarray*}
	\else 
	\begin{eqnarray*}
		\optcost{\para_t} &+& x(t)' \Kmatrix{\para_t} x(t) = x(t)'Q x(t) + u(t)'Ru(t) \\
		&+& \E{y(t+1)' \Kmatrix{\para_t} y(t+1) \Big | \mathcal{F}_t},
	\end{eqnarray*}\fi
	where $u(t)=\Lmatrix{\para_t} x(t)$, and
	\begin{equation} \label{Gproofeq1}
	y(t+1)= A_t x(t) + B_t u(t) + w(t+1) = \para_t \extendedLmatrix{\para_t} x(t) + w(t+1)
	\end{equation}
	describes the desired dynamics of the system. Note that since the true evolution of the system is governed by $\para_0$, the next state is 
	\begin{equation} \label{Gproofeq2}
	x(t+1)= A_0 x(t) + B_0 u(t) + w(t+1) = \para_0 \extendedLmatrix{\para_t} x(t) + w(t+1).
	\end{equation}
	Substituting \eqref{Gproofeq1}, and \eqref{Gproofeq2} in the dynamic programming equation, and using \eqref{systemeq2} for the instantaneous cost $\instantcost{t}$, we have
	\ifarxiv 
	\begin{eqnarray*}
		\optcost{\para_t} + x(t)' \Kmatrix{\para_t} x(t)
		&=& \instantcost{t} + \E{ w(t+1)' \Kmatrix{\para_t} w(t+1) \Big | \mathcal{F}_t }
		+ x(t) \extendedLmatrix{\para_t}' \para_t' \Kmatrix{\para_t} \para_t \extendedLmatrix{\para_t} x(t) \\
		&=& \instantcost{t} + \E{x(t+1)' \Kmatrix{\para_t} x(t+1) \Big | \mathcal{F}_t}
		+ x(t) \extendedLmatrix{\para_t}' \left[\para_t' \Kmatrix{\para_t} \para_t - \para_0' \Kmatrix{\para_t} \para_0\right] \extendedLmatrix{\para_t} x(t).
	\end{eqnarray*}
	\else 
	\begin{eqnarray*}
		&& \optcost{\para_t} + x(t)' \Kmatrix{\para_t} x(t) \\
		&=& \instantcost{t} + \E{ w(t+1)' \Kmatrix{\para_t} w(t+1) \Big | \mathcal{F}_t } \\
		&+& x(t) \extendedLmatrix{\para_t}' \para_t' \Kmatrix{\para_t} \para_t \extendedLmatrix{\para_t} x(t) \\
		&=& \instantcost{t} + \E{x(t+1)' \Kmatrix{\para_t} x(t+1) \Big | \mathcal{F}_t} \\
		&+& x(t) \extendedLmatrix{\para_t}' \left[\para_t' \Kmatrix{\para_t} \para_t - \para_0' \Kmatrix{\para_t} \para_0\right] \extendedLmatrix{\para_t} x(t).
	\end{eqnarray*}\fi  
	Adding up the terms for $t=1,\cdots, T$, we obtain:
	\begin{equation} \label{Gproofeq3}
	\regret{T}= \sum\limits_{t=1}^{T} \left[\instantcost{t} - \optcost{\para_0}\right] =  \term{1} + \term{2} + \term{3} + \term{4},
	\end{equation}
	where \ifarxiv
	\begin{eqnarray}
	\term{1} &=& \sum\limits_{t=1}^{T} \left[\optcost{\para_t}-\optcost{\para_0}\right], \label{term1}\\
	\term{2} &=& \sum\limits_{t=1}^{T} \Big( x(t)' \Kmatrix{\para_t} x(t) - \E{x(t+1)' \Kmatrix{\para_{t+1}} x(t+1) \Big| \mathcal{F}_t} \Big), \label{term2} \\
	\term{3} &=& \sum\limits_{t=1}^{T} \E{ x(t+1)' \left(\Kmatrix{\para_{t+1}} - \Kmatrix{\para_t}\right) x(t+1) \Big| \mathcal{F}_t} , \label{term3}\\
	\term{4} &=& \sum\limits_{t=1}^{T} x(t)' \extendedLmatrix{\para_t}' \Big[ \para_0' \Kmatrix{\para_t} \para_0 
	- \para_t' \Kmatrix{\para_t} \para_t \Big] \extendedLmatrix{\para_t} x(t). \:\:\:\:\:\:\: \label{term4}
	\end{eqnarray}\else
	the expressions for $\term{1}, \term{2}, \term{3}, \term{4}$ are defined in \eqref{term1}-\eqref{term4}.
	\begin{table*}
		\begin{eqnarray}
		\term{1} &=& \sum\limits_{t=1}^{T} \left[\optcost{\para_t}-\optcost{\para_0}\right], \label{term1}\\
		\term{2} &=& \sum\limits_{t=1}^{T} \Big( x(t)' \Kmatrix{\para_t} x(t) - \E{x(t+1)' \Kmatrix{\para_{t+1}} x(t+1) \Big| \mathcal{F}_t} \Big), \label{term2} \\
		\term{3} &=& \sum\limits_{t=1}^{T} \E{ x(t+1)' \left(\Kmatrix{\para_{t+1}} - \Kmatrix{\para_t}\right) x(t+1) \Big| \mathcal{F}_t} , \label{term3}\\
		\term{4} &=& \sum\limits_{t=1}^{T} x(t)' \extendedLmatrix{\para_t}' \Big[ \para_0' \Kmatrix{\para_t} \para_0 
		- \para_t' \Kmatrix{\para_t} \para_t \Big] \extendedLmatrix{\para_t} x(t). \:\:\:\:\:\:\: \label{term4}
		\end{eqnarray}
	\end{table*}
	\fi 
	Let $\episodecount{T}$ be the number of episodes considered until time $T$. Thus,
	\begin{equation*}
	\tau_{\episodecount{T}} \leq T < \tau_{\episodecount{T}+1}.
	\end{equation*}
	Now, letting $n_i = \lfloor \tau_i-\tau_{i-1} \rfloor$ be the length of episode $i$, define the following events
	\begin{eqnarray*}
		\event{G} &=& \bigcap\limits_{i=1}^\infty \left\{ \max\limits_{\tau_{i-1} \leq t < \tau_i} \norm{w(t)}{\infty} \leq \noisemax{n_i}{\frac{\delta}{i^2}} \right\} , \\
		\event{H} &=& \bigcap\limits_{i=1}^\infty \left\{ \para_0 \in \paraspace{i} \right\}.
	\end{eqnarray*}
	According to Corollary \ref{stableprediction}, 
	\begin{equation} \label{Gproofeq4}
	\PP{\event{G} \cap \event{H}} \geq 1-\sum\limits_{i=1}^{\infty} \frac{3 \delta}{i^2} \geq 1- 5 \delta. 
	\end{equation}
	For all $i=1,2,\cdots$, as long as $\para_0 \in \paraspace{i-1}$, according to \eqref{algo2eq1} we have $\optcost{\optpara{i}} \leq \optcost{\para_0}$; i.e. $\optcost{\para_t}-\optcost{\para_0} \leq 0$. Therefore, on $\event{G} \cap \event{H}$ we have
	\begin{equation} \label{Gprooflem1}
	\term{1} \leq 0.
	\end{equation}
	\edit{To conclude the proof of the theorem, we leverage some auxiliary results. The proofs of the following lemmas are deferred to \ifarxiv appendix.\else supplement due to space limitations (available online \cite{faradonbeh2017finite}).\fi}
	\begin{lem}[Bounding $\term{2}$] \label{Gprooflem2}
		On $\event{G} \cap \event{H}$, the following holds with probability at least $1-\delta$:
		\begin{equation*}
		\term{2} \leq \ssconstant_2 + (8T)^{1/2} \ssconstant_3 \Big(\log \left(T \episodecount{T}\right)\Big)^{2/\tailexp} \left( -\log \delta \right)^{1/2+ 2/\tailexp}, 
		\end{equation*}
		where $\ssconstant_2, \ssconstant_3 < \infty$ are fixed constants.
	\end{lem}
	\begin{lem}[Bounding $\term{3}$] \label{Gprooflem3}
		On $\event{G} \cap \event{H}$, we have
		\begin{equation*}
		\term{3} \leq \ssconstant_3 \Big(\log \left(T \episodecount{T}\right)\Big)^{2/\tailexp} \left( -\log \delta \right)^{2/\tailexp} \episodecount{T},
		\end{equation*}
		where $\ssconstant_3$ is the same as Lemma \ref{Gprooflem2}.
	\end{lem}
	\begin{lem}[Bounding $\term{4}$] \label{Gprooflem5}
		On the event $\event{G} \cap \event{H}$, it holds that
		\begin{equation*}
		\term{4} \leq \ssconstant_{4} \episodecount{T}^{3/2} \prediction{T}{\frac{\delta}{\episodecount{T}^2}}^{1/2} T^{1/2},
		\end{equation*}
		for some fixed constant $\ssconstant_{4} < \infty$.
	\end{lem}
	\begin{lem}[Bounding $\episodecount{T}$] \label{Gprooflem6}
		On the event $\event{G} \cap \event{H}$ the following holds:
		\begin{equation*}
		\episodecount{T} \leq \frac{q}{\log \rrate} \log \left( \frac{T \left(\rrate^{1/q}-1\right)}{\tau_1}+1\right).
		\end{equation*}
	\end{lem}
	
	Finally, the definition of $\prediction{n}{\delta}$ in \eqref{predictiondeff} yields
	\begin{equation*}
	\prediction{n}{\delta} = \order{\left( \log n \right)^{4/\tailexp} \left( -\log \delta \right)^{1+ 4/\tailexp}}.
	\end{equation*}
	Therefore, plugging \eqref{Gprooflem1}, and the results of Lemmas  \ref{Gprooflem2}, \ref{Gprooflem3}, \ref{Gprooflem5}, and \ref{Gprooflem6} into \eqref{Gproofeq3}, we get $\regret{T} \leq \orderlog{T^{1/2} \left( -\log \delta \right)^{1/2+ 2/\tailexp}}$, with probability at least $1-\delta$ on $\event{G} \cap \event{H}$. Hence, according to \eqref{Gproofeq4}, the failure probability is at most $\firstcoeff\delta$, which completes the proof.
\end{proof}
\edit{To conclude this section, we briefly discuss the behavior of  the statistical regret introduced in the discussion after Lemma~\ref{minregret}. For this purpose, we use the regret decomposition of \eqref{Gproofeq3} into the terms $\term{1}, \cdots, \term{4}$ being defined in \eqref{term1} - \eqref{term4}. According to Lemma \ref{Gprooflem3}, $\term{3}$ scales logarithmically with $T$. Further, the martingale $\term{2}$ is bounded in expectation; i.e. $\limsup\limits_{T \to \infty} \E{\term{2}}<\infty$. Hence, one can approximately study the behavior of the statistical regret by addressing ${\term{1}}, {\term{4}}$. First, note that the expression $\para_0' \Kmatrix{\para_t} \para_0 - \para_t' \Kmatrix{\para_t} \para_t$ in \eqref{term4} can be substituted by $\left(\para_0 + \para_t\right)' \Kmatrix{\para_t} \left(\para_0 - \para_t\right)$. Since $\Kmatrix{\para_0}$ is positive definite~\cite{faradonbeh2018stabilization}, the magnitude of ${\term{4}}$ is approximately as large as $\sum\limits_{t=1}^T \Mnorm{\para_t - \para_0}{2}$. A similar argument applies to ${\term{1}}$ in the sense that the decay rate of  $\optcost{\para_t}-\optcost{\para_0}$ heavily relies on the error of learning $\para_0$ through $\para_t$. Then, the learning accuracy at time $t$ is at best of the order~$t^{-1/2}$~\cite{lai1986asymptotically}. Hence, the statistical regret an adaptive policy needs to incur is at least $\order{T^{1/2}}$, because of lack of knowledge about the true parameter . Converting this lower bound sketch into a rigorous proof is beyond the scope of this work, and is left as an interesting problem for future studies.}

\section{Conclusion} \label{future}
This work investigated adaptive regulation schemes for linear dynamical systems with quadratic costs, focusing on finite time analysis for regret.
Using the OFU principle, we established non-asymptotic optimality results under the mild condition of stabilizability and also assuming a fairly general heavy-tailed noise distributions.

There are a number of interesting extensions of the current work. First, generalizing the non-asymptotic analysis of efficiency to {\em imperfect} observations of the state vector is a topic of future investigation. Another interesting direction is to specify the sufficient and necessary conditions for the true dynamics which lead to optimality of {\em Certainty Equivalence}. In addition, re-examining the problem for large scale {\em network} systems where the transition matrix can be sparse is also of interest.
\ifarxiv \else
\section*{Acknowledgment}
The authors would like to thank Professor Tze Leung Lai for helpful discussions, and the reviewers for their constructive comments.\fi 

\appendices
\section{Proofs in Sections \ref{Optimal Policies}, \ref{Estimation}}
\begin{proof}[\bf \edit{Proof of Lemma \ref{minregret}}]
	When applying the linear feedback $\Lmatrix{\para_0}$, the closed-loop transition matrix will be $D=\para_0 \extendedLmatrix{\para_0}=A_0+B_0 \Lmatrix{\para_0}$. Letting $P=Q+\Lmatrix{\para_0}'R\Lmatrix{\para_0}$, we have the followings. First,
	\begin{eqnarray*}
		&& \regret{T-1}+x(0)'Px(0)-\optcost{\para_0} = \sum\limits_{t=0}^{T-1}x(t)'Px(t) - T\optcost{\para_0}  
		= \tr{PV_T}- T\optcost{\para_0},
		\end{eqnarray*}
		where $V_T= \sum_{t=0}^{T} x(t)x(t)'$. Second, $x(t+1)=Dx(t)+w(t+1)$ implies $V_{T}=DV_{T}D'+E_T$, where
			\begin{eqnarray*}
			E_T &=& U_T+C_T+x(0)x(0)'
			+ D\left(x(0)x(0)' - x(T-1)x(T-1)'\right)D' , \\
		U_T &=& \sum\limits_{t=0}^{T-2} \left[Dx(t)w(t+1)'+w(t+1)x(t)'D'\right] ,\\
		C_T &=& \sum\limits_{t=1}^{T-1} w(t) w(t)'.
		\end{eqnarray*}
		Third, stability of the system yields $\lim\limits_{T \to \infty} {T^{-1/2}}\norm{x(T-1)}{2}^2=0$, almost surely. Putting all the above together, in addition to $\optcost{\para_0}=\tr{\Kmatrix{\para_0}C}$, and the well known fact \cite{bertsekas1995dynamic}
		\begin{equation} \label{ricatti3}
		\Kmatrix{\para_0}-D' \Kmatrix{\para_0} D = Q + \Lmatrix{\para_0}'R \Lmatrix{\para_0},
		\end{equation}
		we get
		\begin{eqnarray*}
		\lim\limits_{T \to \infty} \frac{1}{T^{1/2}} \regret{T}
		&=& \lim\limits_{T \to \infty} \frac{1}{T^{1/2}} \tr{P \sum\limits_{n=0}^{\infty}D^nE_T {D'}^n -T C\sum\limits_{n=0}^{\infty}{D'}^n P {D}^n}
		= \sum\limits_{n=0}^{\infty}\tr{{D'}^n P D^n \lim\limits_{T \to \infty}  \frac{U_T+C_T-TC}{T^{1/2}}}.
		\end{eqnarray*}
		Letting $\sigma_T^2 = \var \left( T^{-1/2} \regret{T}\right)$, according to Lindeberg's Central Limit Theorem \cite{brown1971martingale}, the last expression above converges in distribution to $\mathcal{N}\left(0,\sigma^2\right)$, where $\sigma^2=\lim\limits_{T\to \infty} \sigma_T^2$. In order to find $\sigma^2$, using $\lim\limits_{T \to \infty} T^{-1/2} \E{\regret{T}}=0$, $\lim\limits_{t \to \infty} \E{x(t)}=0$, and $\lim\limits_{t \to \infty} \cov \left( x(t) \right) = \sum\limits_{n=0}^\infty D^n C {D'}^n$, we obtain
		\begin{eqnarray*}
			\lim\limits_{T\to \infty} T^{-1} \E{\regret{T}^2} &=& \lim\limits_{T\to \infty} T^{-1} \E{ \Bigg( \sum\limits_{t=1}^T \big[w(t)'\Kmatrix{\para_0} w(t) - \optcost{\para_0} + 2 w(t)' \Kmatrix{\para_0} D x(t-1)\big]  \Bigg)^2} \\
		&=& \lim\limits_{T\to \infty} T^{-1} \sum\limits_{t=1}^T \E{ \Big( w(t)'\Kmatrix{\para_0} w(t) - \optcost{\para_0} \right. 
		+ \left. 2 w(t)' \Kmatrix{\para_0} D x(t-1) \Big)^2} \\
		&=& \lim\limits_{T\to \infty} T^{-1} \sum\limits_{t=1}^T \left(\E{ \left[w(t)'\Kmatrix{\para_0} w(t) - \optcost{\para_0}\right]^2} \right. 
		+ \left. 4 \E{ \left[w(t)' \Kmatrix{\para_0} D x(t-1)\right]^2}\right) \\
		&=& 4 \tr{ \Kmatrix{\para_0} C \Kmatrix{\para_0} \sum\limits_{n=1}^\infty D^n C {D'}^n }
		+ \lim\limits_{T\to \infty} T^{-1} \sum\limits_{t=1}^T \var \left[w(t)'\Kmatrix{\para_0} w(t)\right].
		\end{eqnarray*}
\end{proof}
		
\begin{proof}[\bf Proof of Lemma \ref{noisebound}]
	First, note that for all $y>0 ; i=1,\cdots,p ; t=1,\cdots, n$, by Assumption \ref{tailcondition} we have
	\begin{eqnarray*}
		\PP{\left|w_i(t)\right| > \noisemax{n}{\delta}} &\leq& \tailconst \exp \left(-\frac{\noisemax{n}{\delta}^\tailexp}{\tailcoeff}\right) = \tailconst \exp \left(-\frac{\tailcoeff \log \frac{ \tailconst np}{\delta}}{\tailcoeff}\right) = \frac{\delta}{np} . 
	\end{eqnarray*}
	So, using a union bound we get
	\begin{equation*}
	\PP{\event{W}^c} \leq \sum\limits_{t=1}^{n} \sum\limits_{i=1}^{p} \PP{\left|w_i(t)\right| > \noisemax{n}{\delta}} \leq \delta.
	\end{equation*}
\end{proof}

\begin{proof}[\bf Proof of Lemma \ref{statenorm}]	
		First, the behavior of $\Mnorm{\Lambda}{\infty}$ is determined by the blocks of $\Lambda$. In fact, letting $\Lambda=\diag{\Lambda_1, \cdots, \Lambda_k}$, simply the definition of the operator norm $\Mnorm{\cdot}{\infty}$ implies $$\Mnorm{\Lambda}{\infty} \leq \max \limits_{1 \leq i \leq k} \Mnorm{\Lambda_i}{\infty}.$$
		Then, to find an upper bound for the operator norm of an exponent of an arbitrary block, such as 
		\begin{equation*}
		\Lambda_i=\begin{bmatrix} \lambda & 1 & 0 & \cdots & 0\\
		0 & \lambda & 1 & \cdots & 0\\
		\vdots & \vdots & \vdots & \vdots & \vdots \\
		0 & \cdots & 0 & 0 & \lambda\end{bmatrix} \in \C^{m \times m},
		\end{equation*}
		we show that
		\begin{equation} \label{statenormeq1}
		\Mnorm{\Lambda_i^t}{\infty} \leq t^{m-1} \left|\lambda\right|^t \sum\limits_{j=0}^{m-1} \frac{\left|\lambda\right|^{-j}}{j!}.
		\end{equation}
		For this purpose, note that for $k=0,1, \cdots$,
			\begin{equation*}
			\Lambda_i^k = \begin{bmatrix}
			\lambda^k & {k \choose 1} \lambda^{k-1}  & \cdots & {k \choose m-1} \lambda^{k-m+1} \\
			0 & \lambda^k & \cdots & {k \choose m-2} \lambda^{k-m+2} \\
			\vdots & \vdots & \vdots & \vdots \\
			0 & \cdots & 0 & \lambda^k
			\end{bmatrix},
			\end{equation*}
		and let $v \in \C^m$ be such that $\norm{v}{\infty}=1$. For $\ell=1,\cdots,m$, the $\ell$-th coordinate of $\Lambda_i^tv$ is $\sum\limits_{j=0}^{m-\ell} {t \choose j} \lambda^{t-j}v_{j+\ell+1}$, which, because of ${t \choose j} \leq \frac{t^j}{j!}$, is at most the right hand side of \eqref{statenormeq1}. Therefore, because of $\Lambda^t=\diag{\Lambda_1^t, \cdots, \Lambda_k^t}$, we have $\Mnorm{\Lambda^t}{\infty} \leq \MJordanconst{t}{\Lambda}$. Now, by $x(t)=D^t x(0)+ \sum\limits_{i=1}^{t}D^{t-i}w(i)$, by Lemma \ref{noisebound}, on the event $\event{W}$ we have 
		\begin{eqnarray*}
				\norm{x(t)}{2} &=& \norm{P^{-1}\Lambda^tP x(0)+ \sum\limits_{i=1}^{t}P^{-1}\Lambda^{t-i}Pw(i)}{2} \\
				&\leq& \Opnorm{P^{-1}}{\infty}{2} \left(\Mnorm{\Lambda^t}{\infty} \norm{Px(0)}{\infty}+ \sum\limits_{i=1}^{t}\norm{\Lambda^{t-i}Pw(i)}{\infty}\right) \\
				&\leq& \predbound{n}{\delta}.
		\end{eqnarray*}
\end{proof}

\begin{proof}[\bf Proof of Lemma \ref{empiricalcov}]
	In this proof, we use the following Matrix Bernstein inequality \cite{tropp2012user}:
	\begin{lem} \label{MBernstein}
		Let $X_i \in \R^{p \times p}, i=1,\cdots, n$ be a sequence of independent symmetric random matrices. Assume for all $i=1,\cdots, n$, we have $\E{X_i}=0$ and $\eigmax{X_i} \leq \ssconstant$. Then, for all $y \geq 0$ we have
		\begin{equation*}
		\PP{\eigmax{\sum\limits_{i=1}^{n}X_i} \geq y} \leq 2p \!\ \exp \left(-\frac{3y^2}{6\sigma^2 + 2 \ssconstant y}\right),
		\end{equation*}
		where $\sigma^2 = \eigmax{\sum\limits_{i=1}^{n}\E{X_i^2}}$.
	\end{lem}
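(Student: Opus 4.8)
The plan is to prove this via the \emph{matrix Laplace transform} method. Since $\eigmax{\sum_i X_i}$ is the largest eigenvalue in magnitude of a symmetric matrix, it equals $\max\{\lambda^+(\sum_i X_i),\,\lambda^+(-\sum_i X_i)\}$, where $\lambda^+(\cdot)$ denotes the algebraically largest eigenvalue. The family $\{-X_i\}$ satisfies the same hypotheses with the same variance proxy $\sigma^2$, so it suffices to prove the one-sided estimate $\PP{\lambda^+(\sum_i X_i) \geq y} \leq p\,\exp\!\big(-3y^2/(6\sigma^2 + 2\ssconstant y)\big)$ and then apply it to both $\{X_i\}$ and $\{-X_i\}$, collecting the factor $2p$ from a union bound.

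For the one-sided estimate, I would fix $\theta > 0$ and note that $\lambda^+(S) \geq y$ forces $\tr{\exp(\theta S)} = \sum_j e^{\theta \lambda_j(S)} \geq e^{\theta y}$, so Markov's inequality gives
\[
\PP{\lambda^+(\textstyle\sum_i X_i) \geq y} \leq e^{-\theta y}\,\E{\tr{\exp(\theta \textstyle\sum_i X_i)}}.
\]
The crux is controlling the matrix moment generating function, and here I would invoke Lieb's concavity theorem, which yields the subadditivity
\[
\E{\tr{\exp(\theta \textstyle\sum_i X_i)}} \leq \tr{\exp\Big(\textstyle\sum_i \log \E{\exp(\theta X_i)}\Big)}
\]
by peeling off one independent term at a time and using concavity of $A \mapsto \tr{\exp(H + \log A)}$.

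Next I would bound each term. Writing $g(x) = (e^{\theta x} - 1 - \theta x)/x^2$, a scalar function that is monotone increasing, the eigenvalue bound $\eigmax{X_i} \leq \ssconstant$ gives $g(X_i) \preceq g(\ssconstant) I$ in the functional calculus; since $e^{\theta X_i} - I - \theta X_i$, $g(X_i)$, and $X_i^2$ are all functions of the single matrix $X_i$ and hence commute, this upgrades to $e^{\theta X_i} - I - \theta X_i = g(X_i) X_i^2 \preceq g(\ssconstant) X_i^2$ in the positive-semidefinite order. Taking expectations and using $\E{X_i} = 0$ together with $I + A \preceq e^A$ gives $\E{\exp(\theta X_i)} \preceq \exp\!\big(g(\ssconstant)\E{X_i^2}\big)$, hence $\log \E{\exp(\theta X_i)} \preceq g(\ssconstant)\E{X_i^2}$. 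Substituting back, monotonicity of the trace exponential under the semidefinite order together with $\tr{\exp(M)} \leq p\,\exp(\eigmax{M})$ yields
\[
\E{\tr{\exp(\theta \textstyle\sum_i X_i)}} \leq \tr{\exp\Big(g(\ssconstant)\textstyle\sum_i \E{X_i^2}\Big)} \leq p\,\exp\big(g(\ssconstant)\sigma^2\big).
\]
Finally I would bound the Bernstein function $g(\ssconstant) \leq (\theta^2/2)/(1 - \theta\ssconstant/3)$ for $0 \leq \theta\ssconstant < 3$, insert it into $\exp(-\theta y + g(\ssconstant)\sigma^2)$, and choose $\theta = y/(\sigma^2 + \ssconstant y/3)$, which collapses the exponent to exactly $-3y^2/(6\sigma^2 + 2\ssconstant y)$ while respecting the constraint $\theta\ssconstant < 3$.

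The hard part will be justifying the subadditivity step: unlike the scalar case the expectation of $\tr{\exp(\theta\sum_i X_i)}$ does not factor over $i$, and Golden--Thompson handles only two factors, which would force a lossy iteration (the Ahlswede--Winter constant). Obtaining the sharp Bernstein constant requires Lieb's concavity theorem, the genuinely deep input behind the master tail bound of \cite{tropp2012user}. I would therefore treat Lieb's theorem as an available black box and devote the write-up to the Laplace-transform reduction, the commuting per-term mgf estimate, and the final optimization over $\theta$.
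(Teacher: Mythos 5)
The paper offers no proof of this lemma at all: it is imported verbatim from the cited reference \cite{tropp2012user}, so there is no internal argument to compare yours against. Your proposal is a correct reconstruction of the proof in that reference --- the matrix Laplace transform reduction, subadditivity of the matrix cumulant generating function via Lieb's concavity theorem, the commuting functional-calculus bound $e^{\theta X_i} - I - \theta X_i \preceq g(\ssconstant)\, X_i^2$, and the final choice $\theta = y/(\sigma^2 + \ssconstant y/3)$ are exactly Tropp's steps, and your closing algebra (the exponent collapsing to $-3y^2/(6\sigma^2 + 2\ssconstant y)$ while respecting $\theta\ssconstant < 3$) checks out. One detail you handle that the paper's citation silently glosses over: here $\eigmax{\cdot}$ denotes the largest eigenvalue \emph{in magnitude}, whereas Tropp's one-sided master bound controls the algebraically largest eigenvalue with a factor $p$; your symmetrization over $\{X_i\}$ and $\{-X_i\}$ (both families satisfy the hypotheses with the same $\sigma^2$, since $\E{(-X_i)^2}=\E{X_i^2}$) plus a union bound is precisely what yields the factor $2p$ in the statement as written, so your write-up actually justifies the constant the paper quotes. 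Treating Lieb's theorem as a black box is reasonable --- it is the one genuinely deep input --- and everything you build around it is sound.
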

	Letting $X_i=w(i)w(i)'-C$, and $\ssconstant = p \noisemax{n}{\delta}^2$, clearly $\E{X_i}=0$. Further, we have
	\begin{eqnarray*}
		\sigma^2 &=& \eigmax{\sum\limits_{i=1}^{n}\E{X_i^2}} \\
		&\leq& \sum\limits_{i=1}^{n}\eigmax{\E{\norm{w(i)}{2}^2 w(i)w(i)'}-C^2} \leq n \ssconstant \eigmax{C}, 
	\end{eqnarray*}
	since on $\event{W}$, the inequality $\max\limits_{1 \leq i \leq n}\norm{w(i)}{2}^2 \leq \ssconstant $ holds. Therefore, by \eqref{empiricalcoveq1} we have
	\begin{eqnarray*}
		\PP{\eigmax{C_n-C}>\epsilon} &=& \PP{\eigmax{\sum\limits_{i=1}^{n}X_i}>n\epsilon} \\
		&\leq& 2p \!\ \exp \left(- \frac{3n\epsilon^2}{6 \ssconstant \eigmax{C}+2\ssconstant \epsilon}\right) \leq \delta.
	\end{eqnarray*}
\end{proof}

\begin{proof}[\bf Proof of Lemma \ref{crossproduct}]
	In this proof, we use Matrix Azuma inequality of Lemma \ref{MAzuma}.
	Letting 
	\begin{eqnarray*}
		X_i &=& Dx(i-1)w(i)'+w(i)x(i-1)'D', \\
		\mathcal{F}_i &=& \sigma \left(w(1), \cdots,w(i)\right), \\
		M_i &=& 2 p^{1/2} \noisemax{n}{\delta}\predbound{n}{\delta} \Mnorm{D}{2} I_p ,
	\end{eqnarray*}
	clearly, $\E{X_{i+1}|\mathcal{F}_i}=0$. Further, $M_i^2-X_i^2$ is positive semidefinite, since by Lemma \ref{noisebound}, and Lemma \ref{statenorm}, on $\event{W}$ we have 
	\begin{eqnarray*}
		\max\limits_{1 \leq i \leq n} \norm{w(i)}{2} &\leq& p^{1/2} \noisemax{n}{\delta} , \\
		\max\limits_{0 \leq i \leq n-1} \norm{x(i)}{2} &\leq& \predbound{n}{\delta}.
	\end{eqnarray*}
	Therefore, $\sigma^2 = 4np \Mnorm{D}{2}^2 \noisemax{n}{\delta}^2 \predbound{n}{\delta}^2$, and by \eqref{crossproducteq1} we have
	\begin{eqnarray*}
		\PP{\eigmax{U_n}>\epsilon} &=& \PP{\eigmax{\sum\limits_{i=1}^{n}X_i}>n\epsilon} \\
		&\leq& 2p \!\ \exp \left(- \frac{n\epsilon^2}{32p \Mnorm{D}{2}^2 \noisemax{n}{\delta}^2 \predbound{n}{\delta}^2}\right) \leq \delta.
	\end{eqnarray*}
\end{proof}

\section{Proofs in Section \ref{Adaptive Control Algorithms}}
\begin{proof}[\bf Proof of Lemma \ref{OFU}]
	Suppose that $\optcost{\para_1} \leq \optcost{\para_0}$ and $D=\para_1 \extendedLmatrix{\para_1}= \para_0 \extendedLmatrix{\para_1}$. Applying the linear feedback 
	$$\policy_1: u(t)=\Lmatrix{\para_1} x(t), t=0,1,\cdots$$ 
	to a system evolving according to the dynamics parameter $\para_0$, the closed-loop matrix will be $x(t+1)=Dx(t)+w(t+1)$. Letting $P= Q+ \Lmatrix{\para_1}' R \Lmatrix{\para_1}$, we have
	\begin{eqnarray*}
		\E{\instantcost{t}} &=& \E{x(t)' P x(t)} \\
		&=& \E{x(t-1)' D' P D x(t-1)} + \E{w(t)' P w(t)} \\
		&=& \cdots = x(0)' {D'}^t P {D}^t x(0) + \sum\limits_{i=1}^{t} \E{w(i)' {D'}^{t-i} P {D}^{t-i} w(i)}.
	\end{eqnarray*}
	Note that due to the stabilizability of $\para_0$, the inequality $\optcost{\para_1} \leq \optcost{\para_0}$ implies that $\para_1$ is also stabilizable, and hence by Lemma \ref{stabilizable} we have $\eigmax{D}<1$. Thus,
	\begin{equation} \label{OFUproofeq1}
	\lim\limits_{t \to \infty} x(0)' {D'}^t P {D}^t x(0) =0.
	\end{equation} 
	Furthermore, by $\E{w(i)w(i)'}=C$, the second term is $\tr{C \sum\limits_{i=0}^{t-1} {D'}^{i} P {D}^{i}}$. Therefore, using \eqref{OFUproofeq1} we get
	\begin{equation*}
	\lim\limits_{t \to \infty} \E{\instantcost{t}} = \tr{C \sum\limits_{i=0}^{\infty} {D'}^{i} P {D}^{i}}.
	\end{equation*}
	The above convergence holds for the Cesaro mean of the sequence $\left\{ \E{\instantcost{t}} \right\}_{t=1}^\infty$ as well, i.e. the expected average cost is
	\begin{equation*} \label{OFUproofeq3}
	\avecost{\para_0}{\policy_1} = \tr{C \sum\limits_{i=0}^{\infty} {D'}^{i} P {D}^{i}}.
	\end{equation*}
	Similarly, since $u(t)=\Lmatrix{\para_1} x(t)$ is optimal for a system of dynamics parameter $\para_1$,
	\begin{eqnarray*}
		\optcost{\para_0} \geq \optcost{\para_1} = \tr{C \sum\limits_{i=0}^{\infty} {D'}^{i} P {D}^{i}} 
		= \avecost{\para_0}{\policy_1} \geq \optcost{\para_0},
	\end{eqnarray*} 
	i.e. the linear feedback $\Lmatrix{\para_1}$ is an optimal policy for a system of dynamics parameter $\para_0$, which is the desired result. 
\end{proof}

\begin{proof}[\bf Proof of Lemma \ref{Gprooflem2}]
		First, write
		\begin{eqnarray*}
			\term{2} &=& \sum\limits_{t=1}^{T} x(t)' \Kmatrix{\para_t} x(t) - \sum\limits_{t=2}^{T+1} \E{x(t)' \Kmatrix{\para_{t}} x(t) \Big| \mathcal{F}_{t-1}}  \\
			&=& \E{x(1)' \Kmatrix{\para_{1}} x(1)} - \E{x(T+1)' \Kmatrix{\para_{T+1}} x(T+1) \Big| \mathcal{F}_T} \\
			&+& \sum\limits_{t=1}^{T} \left(x(t)' \Kmatrix{\para_t} x(t) - \E{x(t)' \Kmatrix{\para_{t}} x(t) \Big| \mathcal{F}_{t-1}}\right).
		\end{eqnarray*}
		Then, letting
		\begin{equation} \label{Gprooflem2eq1}
		\ssconstant_{0}= \sup\limits_{1 \leq i \leq \infty} \Mnorm{\Kmatrix{\optpara{i}}}{2},
		\end{equation}
		note that the above sequence we are taking supremum on, is bounded because for positive definite matrix $C$, on $\event{H}$ the OFU principle of \eqref{algo2eq1} implies
		\begin{equation*}
		\optcost{\para_0} \geq \optcost{\optpara{i}} = \tr{\Kmatrix{\optpara{i}}C},
		\end{equation*}
		hence, 
		\begin{eqnarray*}
			\optcost{\para_0} &\geq& \tr{C^{1/2} \Kmatrix{\optpara{i}} C^{1/2}} \\
			&\geq& \eigmax{C^{1/2} \Kmatrix{\optpara{i}} C^{1/2}} \\
			&=& \sup\limits_{v \neq 0} \frac{v' \Kmatrix{\optpara{i}} v}{\norm{v}{2}^2} \frac{\norm{v}{2}^2}{v' C^{-1}v} \\
			&\geq& \eigmin{C} \eigmax{\Kmatrix{\optpara{i}}},
		\end{eqnarray*}
		i.e.
		\begin{equation*}
		\ssconstant_{0} \leq \frac{\optcost{\para_0}}{\eigmin{C}} < \infty.
		\end{equation*}
		To proceed, using the boundedness of $\paraspace{0}$,
		\begin{equation} \label{Gprooflem2eq2}
		\E{x(1)' \Kmatrix{\para_{1}} x(1)} = x(0) \extendedLmatrix{\para_1}' \para_1' \Kmatrix{\para_1} \para_1 \extendedLmatrix{\para_1} x(0) + \tr{\Kmatrix{\para_1}C} \leq \ssconstant_2,
		\end{equation}
		for some $\ssconstant_2 < \infty$. Defining the stable closed-loop matrices $D_i= \para_0 \extendedLmatrix{\optpara{i}}, i=1,\cdots, \episodecount{T}$, similar to Lemma \ref{statenorm}, one can simply show that on the event $\event{G}$, for constant $\MJordanconst{}{D_1 , \cdots, D_{\episodecount{T}}}< \infty$, it holds that
		\begin{equation} \label{Gproofstatenorm}
		\max\limits_{1 \leq t \leq T} \norm{x(t)}{2} \leq \MJordanconst{}{D_1 , \cdots, D_{\episodecount{T}}} \noisemax{T}{\frac{\delta}{\episodecount{T}^2}},
		\end{equation}
		where the fact $$\max\limits_{1 \leq i \leq \episodecount{T}} \noisemax{n_i}{\frac{\delta}{i^2}} \leq \noisemax{T}{\frac{\delta}{\episodecount{T}^2}}$$ is used above. Therefore, for martingale difference sequence
		\begin{equation*}
		\left\{ X_t \right\}_{t=1}^T = \left\{ x(t)' \Kmatrix{\para_t} x(t) - \E{x(t)' \Kmatrix{\para_{t}} x(t) \Big| \mathcal{F}_{t-1}} \right\}_{t=1}^T,
		\end{equation*}
		on $\event{G}$ we have 
		\begin{eqnarray*}
			\left| X_t \right| &\leq& 2 \Mnorm{\Kmatrix{\para_t}}{2} \norm{x(t)}{2}^2 \leq 2 \ssconstant_{0} \MJordanconst{}{D_1 , \cdots, D_{\episodecount{T}}}^2 \noisemax{T}{\frac{\delta}{\episodecount{T}^2}}^2
			\leq \ssconstant_3 \Big(\log \left(T \episodecount{T}\right)\Big)^{2/\tailexp} \left( -\log \delta \right)^{2/\tailexp},  
		\end{eqnarray*}
		for some $\ssconstant_3 < \infty$. Letting
		\begin{eqnarray*}
			\sigma^2 &=& \ssconstant_3^2 T \Big(\log \left(T \episodecount{T}\right)\Big)^{4/\tailexp} \left( -\log \delta \right)^{4/\tailexp} \geq \sum\limits_{t=1}^{T} X_t^2 , \\
			y &=& (8T)^{1/2} \ssconstant_3 \Big(\log \left(T \episodecount{T}\right)\Big)^{2/\tailexp} \left( -\log \delta \right)^{1/2+ 2/\tailexp},
		\end{eqnarray*}
		apply Lemma \ref{MAzuma} to get
		\begin{equation*}
		\PP{\sum\limits_{t=1}^{T} X_t > y} \leq \exp \left( -\frac{y^2}{8 \sigma^2}\right) \leq \delta,
		\end{equation*}
		which in addition to \eqref{Gprooflem2eq2} implies the desired result, because of 
		$$\E{x(T+1)' \Kmatrix{\para_{T+1}} x(T+1) \Big| \mathcal{F}_T} \geq 0.$$
\end{proof}

\begin{proof}[\bf Proof of Lemma \ref{Gprooflem3}]
		Note that as long as both of $t$ and $t+1$ are in episode $i$, we have $$\para_t=\para_{t+1}= \optpara{i}.$$
		Thus, using \eqref{Gprooflem2eq1}, and \eqref{Gproofstatenorm}, on $\event{G} \cap \event{H}$ we have
		\begin{eqnarray*}
			\term{3} &=& \sum\limits_{i=1}^{\episodecount{T}-1} \E{ x(\lceil \tau_i \rceil)' \left(\Kmatrix{\optpara{i+1}} - \Kmatrix{\optpara{i}}\right) x(\lceil \tau_i \rceil) \Big| \mathcal{F}_t} , \\
			&\leq& \episodecount{T} \max\limits_{1 \leq i \leq \episodecount{T}-1} \Mnorm{\Kmatrix{\optpara{i+1}}}{2} \norm{x(\lceil \tau_i \rceil)}{2}^2 \\
			&\leq& \ssconstant_{0} \episodecount{T} \MJordanconst{}{D_1 , \cdots, D_{\episodecount{T}}}^2 \noisemax{T}{\frac{\delta}{\episodecount{T}^2}}^2\\
			&\leq& \ssconstant_3 \Big(\log \left(T \episodecount{T}\right)\Big)^{2/\tailexp} \left( -\log \delta \right)^{2/\tailexp} \episodecount{T}.
		\end{eqnarray*}
\end{proof}
	
\begin{lem} [Lipschitz continuity] \label{LipschitzK}
		Suppose that $\para_1 \in \R^{p \times q}$ is stabilizable. Then, there are constants $0 <\epsilon_1, \Lipschitz{K}<\infty$, such that for an arbitrary stabilizable $\para_2 \in \R^{p \times q}$ satisfying $\Mnorm{\para_2 - \para_1}{2}<\epsilon_1$, the following holds: 
		\begin{equation*}
		\Mnorm{\Kmatrix{\para_1}-\Kmatrix{\para_2}}{2} \leq \Lipschitz{K} \Mnorm{\para_1 - \para_2}{2}.
		\end{equation*}
		Note that according to Lemma \ref{stabilizable} and \eqref{ricatti1}, we obtain 
		\begin{eqnarray*}
			\Mnorm{\Lmatrix{\para_2}-\Lmatrix{\para_1}}{2} &\leq& \Lipschitz{L} \Mnorm{\para_2-\para_1}{2}, \\
			\left|\optcost{\para_2}-\optcost{\para_1}\right| &\leq& \Lipschitz{\cost} \Mnorm{\para_2-\para_1}{2},
		\end{eqnarray*}
		for $\Lipschitz{L}, \Lipschitz{\cost}<\infty$.
\end{lem}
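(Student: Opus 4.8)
The plan is to view the Riccati equation \eqref{ricatti2} as implicitly defining $\Kmatrix{\para}$ as a function of $\para$, and to establish local smoothness via the Implicit Function Theorem; local Lipschitz continuity then follows at once from continuous differentiability. Regard $\Kmatrix{\para}$ as living in the finite-dimensional space of symmetric $p\times p$ matrices, and define
\begin{equation*}
F(K,\para) = Q + A'KA - A'KB\left(B'KB+R\right)^{-1}B'KA - K,
\end{equation*}
so that $F\left(\Kmatrix{\para},\para\right)=0$ by \eqref{ricatti2}. Since $R$ is positive definite and $\para_1$ is stabilizable with unique solution $\Kmatrix{\para_1}\succeq 0$ by Lemma \ref{stabilizable}, the matrix $B_1'\Kmatrix{\para_1}B_1+R$ is invertible; hence $F$ is real-analytic (in particular $C^1$) in a neighborhood of $\left(\Kmatrix{\para_1},\para_1\right)$, because the entries of $\left(B'KB+R\right)^{-1}$ are rational functions of the entries of $K,\para$ with nonvanishing denominator there.

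The crux is to verify that the partial Fréchet derivative $\partial F/\partial K$, evaluated at $\left(\Kmatrix{\para_1},\para_1\right)$, is an invertible linear operator. First I would compute this derivative directly: abbreviating $S=B_1'\Kmatrix{\para_1}B_1+R$, $W=B_1'\Kmatrix{\para_1}A_1$, and $L=\Lmatrix{\para_1}=-S^{-1}W$, differentiating the term $-W'S^{-1}W$ in a direction $\Delta$ and collecting terms yields the clean identity
\begin{equation*}
\frac{\partial F}{\partial K}(\Delta) = \left(A_1+B_1L\right)'\Delta\left(A_1+B_1L\right) - \Delta = D'\Delta D - \Delta,
\end{equation*}
where $D=\para_1\extendedLmatrix{\para_1}$ is the closed-loop matrix; the cross terms combine into the congruence $\left(A_1+B_1L\right)'\Delta\left(A_1+B_1L\right)$ precisely because $L$ satisfies the first-order optimality condition $SL+W=0$. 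This is the negative Stein operator $\Delta\mapsto \Delta - D'\Delta D$, whose eigenvalues are $\left\{1-\lambda_i(D)\lambda_j(D)\right\}_{i,j}$. Because $\para_1$ is stabilizable, Lemma \ref{stabilizable} guarantees that $L$ is a stabilizer, so $\eigmax{D}<1$ and therefore $\left|\lambda_i(D)\lambda_j(D)\right|\le\eigmax{D}^2<1$; no eigenvalue vanishes, and $\partial F/\partial K$ is invertible.

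With invertibility in hand, the Implicit Function Theorem supplies a neighborhood of radius $\epsilon_1>0$ on which $\para\mapsto\Kmatrix{\para}$ is uniquely defined and continuously differentiable; bounding the operator norm of its derivative on a compact subneighborhood produces the Lipschitz constant $\Lipschitz{K}<\infty$, which is the asserted bound on $\Mnorm{\Kmatrix{\para_1}-\Kmatrix{\para_2}}{2}$. The two corollary estimates follow by composition: $\Lmatrix{\para}$ is given by \eqref{ricatti1} as a smooth rational function of $\Kmatrix{\para}$ and $\para$ on this neighborhood, so shrinking $\epsilon_1$ if necessary yields the bound with constant $\Lipschitz{L}$; and since $\optcost{\para}=\tr{\Kmatrix{\para}C}$ by Lemma \ref{stabilizable}, the functional $\optcost{\cdot}$ is a fixed linear image of $\Kmatrix{\cdot}$, giving $\left|\optcost{\para_2}-\optcost{\para_1}\right|\le \Mnorm{C}{2}\,\Lipschitz{K}\,\Mnorm{\para_2-\para_1}{2}$ and hence $\Lipschitz{\cost}$.

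The main obstacle is the derivative computation together with its invertibility: one must recognize that the seemingly messy linearization of the quadratic Riccati term collapses to the Stein operator associated with the closed-loop matrix, and that stabilizability is exactly the hypothesis rendering this operator nonsingular. A secondary point is ensuring that $B'KB+R$ remains invertible and that the stabilizing solution branch is selected consistently across the neighborhood; both are handled automatically by the local uniqueness built into the Implicit Function Theorem, given the positive definiteness of $R$.
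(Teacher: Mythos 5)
Your argument is correct in substance but takes a genuinely different route from the paper's. You linearize the Riccati map $F(K,\para)$ and invoke the Implicit Function Theorem; the key computation, that $\partial F/\partial K$ at the solution is the Stein operator $\Delta \mapsto D'\Delta D - \Delta$ for the closed-loop matrix $D=\para_1\extendedLmatrix{\para_1}$ (nonsingular since $\eigmax{D}<1$ by Lemma \ref{stabilizable}), is carried out correctly via the first-order condition $SL+W=0$. The paper never differentiates the Riccati equation. Instead it first proves, by an explicit binomial-type expansion of $D_2^n-D_1^n$ (after passing to a power $k$ with $\Mnorm{D_i^k}{2}\le 1-2\rho$ so as to handle stable matrices whose operator norm exceeds one), that the Lyapunov sum $D\mapsto \sum_{n\ge 0}D'^{n}PD^{n}$ is locally Lipschitz in $D$; it then sandwiches $x_0'\left(\Kmatrix{\para_1}-\Kmatrix{\para_2}\right)x_0$ between the costs obtained by applying the feedback that is optimal for one system to the other system, exploiting suboptimality in each direction. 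The paper's route is more elementary and keeps the constants traceable through the regret analysis; yours is shorter and yields real-analyticity of $\para\mapsto\Kmatrix{\para}$, which is stronger than needed. Your treatment of the two corollary bounds (composition through \eqref{ricatti1} and linearity of $\tr{\Kmatrix{\para}C}$) matches the spirit of the paper's remark and is fine.

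One step you should make explicit rather than wave at: the Implicit Function Theorem produces a locally unique $C^1$ branch of solutions through $\left(\Kmatrix{\para_1},\para_1\right)$, but the lemma concerns the solution $\Kmatrix{\para_2}$ singled out by Lemma \ref{stabilizable}, which a priori could be a different root of \eqref{ricatti2} lying outside your neighborhood; local uniqueness alone does not rule this out. The identification is easy but needs saying: since $Q\succ 0$, any solution of \eqref{ricatti2} satisfies $\Kmatrix{\para}\succeq Q\succ 0$, so for $\epsilon_1$ small the IFT branch stays positive definite, is therefore an admissible (positive semidefinite) solution for the stabilizable parameter $\para_2$, and the uniqueness assertion of Lemma \ref{stabilizable} then forces it to coincide with $\Kmatrix{\para_2}$. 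With that line added, your proof is complete.
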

	
\begin{proof}[\bf Proof of Lemma \ref{LipschitzK}]
		First, let $D_1,D_2 \in \R^{p \times p}$ be stable, and $P \in \R^{p \times p}$ be a positive semidefinite matrix. For $i=1,2$, define $F_i = \sum\limits_{n=0}^{\infty} D_i'^n P D_i^n$. We show that 
		\begin{equation} \label{LipschitzKeq1}
		\Mnorm{F_1-F_2}{2} \leq \Lipschitz{F} \Mnorm{D_1 - D_2}{2},
		\end{equation}
		for some $\Lipschitz{F}<\infty$. For $n=1,2,\cdots$, we have
		\begin{eqnarray*}
			\Mnorm{D_2^n - D_1^n}{2} &=& \Mnorm{\left(D_1+ D_2-D_1\right)^n - D_1^n}{2} \\
			&\leq& \sum\limits_{a_0+\sum\limits_{j=1}^{m}\left(a_j+b_i\right)=n, \sum\limits_{j=0}^{m}a_j<n} \Mnorm{D_1^{a_0}\prod\limits_{j=1}^{m} \left(D_2-D_1\right)^{b_j}D_1^{a_j}}{2} \\
			&\leq& \sum\limits_{a_0+\sum\limits_{j=1}^{m}\left(a_j+b_i\right)=n, \sum\limits_{j=0}^{m}a_j<n} \Mnorm{D_1}{2}^{\sum\limits_{j=0}^{m} a_j} \Mnorm{D_2-D_1}{2}^{\sum\limits_{j=1}^{m} b_j} \\
			&=& \sum\limits_{\ell=1}^n {n \choose \ell} \Mnorm{D_1}{2}^{n-\ell} \Mnorm{D_1-D_2}{2}^\ell \\
			&\leq& \frac{\left(\Mnorm{D_1}{2} + \Mnorm{D_1-D_2}{2}\right)^n}{\Mnorm{D_1}{2}}n \Mnorm{D_1-D_2}{2}.
		\end{eqnarray*}
		Then, there is $k<\infty$, such that
		\begin{equation*}
		\max \left\{ \Mnorm{D_1'^k}{2}, \Mnorm{D_1^k}{2}, \Mnorm{D_2'^k}{2}, \Mnorm{D_2^k}{2} \right\} \leq 1-2\rho, 
		\end{equation*}
		for some $\rho >0$. Define $$E_i=D_i^k, P_i=\sum\limits_{n=0}^{k-1} D_i'^n P D_i^n.$$
		Noting that $$\Mnorm{D_2'-D_1'}{2} \leq \Lipschitz{0} \Mnorm{D_2-D_1}{2},$$
		we have
		\begin{eqnarray*}
			\Mnorm{E_2-E_1}{2} &\leq& \frac{\left(\Mnorm{D_1}{2} + \Mnorm{D_2-D_1}{2}\right)^k}{\Mnorm{D_1}{2}} k \Mnorm{D_2-D_1}{2} = \Lipschitz{E}\Mnorm{D_2-D_1}{2}, \\
			\Mnorm{E_2'-E_1'}{2} &\leq& \frac{\left(\Mnorm{D_1'}{2} + \Mnorm{D_2'-D_1'}{2}\right)^k}{\Mnorm{D_1'}{2}} k \Mnorm{D_2'-D_1'}{2} = \Lipschitz{E'} \Mnorm{D_2-D_1}{2}, \\
			\Mnorm{P_2-P_1}{2} &\leq& \sum\limits_{n=1}^{k-1} \left[\Mnorm{D_2'^n P \left(D_2^n-D_1^n\right)}{2} + \Mnorm{\left(D_2'^n-D_1'^n\right) P D_1^n}{2}\right] \\
			&\leq& \sum\limits_{n=1}^{k-1} \left[ \Mnorm{D_2'^n}{2} \Mnorm{D_2^n-D_1^n}{2} + \Mnorm{D_1^n}{2} \Mnorm{D_2'^n-D_1'^n}{2}\right] \Mnorm{P}{2} n \\
			&\leq& \Lipschitz{P} \Mnorm{D_2-D_1}{2}. 	
		\end{eqnarray*}
		Suppose that $\Mnorm{D_2-D_1}{2}$ is small enough to satisfy 
		\begin{equation*}
		\max \left\{\Mnorm{E_2-E_1}{2} , \Mnorm{E_2'-E_1'}{2}\right\} \leq \rho.
		\end{equation*}
		Since $\Mnorm{E_1}{2}+\Mnorm{E_1-E_2}{2} \leq 1-\rho$, $\Mnorm{E_1'}{2}+\Mnorm{E_1'-E_2'}{2} \leq 1-\rho$, and $F_i=\sum\limits_{n=0}^{\infty} E_i'^n P_i E_i^n$, similar to the upper bound above for $\Mnorm{D_2^n - D_1^n}{2}$ we have
		\begin{eqnarray*}
			\Mnorm{E_2^n-E_1^n}{2} &\leq& \frac{\left(\Mnorm{E_1}{2}+\Mnorm{E_2-E_1}{2}\right)^n}{\Mnorm{E_1}{2}} n \Mnorm{E_2-E_1}{2} \\
			&\leq& \frac{\Lipschitz{E}}{\Mnorm{E_1}{2}} \Mnorm{D_2-D_1}{2} n \left(1-\rho\right)^n , \\
			\Mnorm{E_2'^n-E_1'^n}{2} &\leq& \frac{\left(\Mnorm{E_1'}{2}+\Mnorm{E_2'-E_1'}{2}\right)^n}{\Mnorm{E_1'}{2}} n \Mnorm{E_2'-E_1'}{2} \\
			&\leq& \frac{\Lipschitz{E'}}{\Mnorm{E_1'}{2}} \Mnorm{D_2-D_1}{2} n \left(1-\rho\right)^n .
		\end{eqnarray*}
		Thus,
		\begin{eqnarray*}
			\Mnorm{F_2-F_1}{2} &\leq& \sum\limits_{n=0}^{\infty} \left[\Mnorm{E_2'^n P_2 \left(E_2^n-E_1^n\right)}{2} + \Mnorm{\left(E_2'^n-E_1'^n\right) P_2 E_1^n}{2} + \Mnorm{E_1'^n \left(P_2-P_1\right)E_1^n}{2}\right] \\
			&\leq& \sum\limits_{n=0}^{\infty} \left[ \frac{\Mnorm{P_2}{2}\Lipschitz{E}}{\Mnorm{E_1}{2}} n + \frac{\Mnorm{P_2}{2} \Lipschitz{E'}}{\Mnorm{E_1'}{2}} n + \Lipschitz{P} \right] \left(1-\rho\right)^{2n} \Mnorm{D_2-D_1}{2} \\
			&=& \Lipschitz{F} \Mnorm{D_2-D_1}{2},
		\end{eqnarray*}
		i.e. \eqref{LipschitzKeq1} holds. Next, to prove the desired inequality, consider two systems $(1),(2)$, with cumulative costs $\cost=\sum\limits_{t=0}^{\infty}\instantcost{t}$, where for $i=1,2$, System ($i$) evolves according to $x(t+1)=A_ix(t)+B_iu(t)$, and both systems share the initial state $x(0)=x_0$, where $\norm{x_0}{2}=1$. Denoting the optimal accumulative cost of System ($i$) by $\cost^{(i)}$, we have $\cost^{(i)}=x_0'\Kmatrix{\para_i}x_0$ \cite{bertsekas1995dynamic}. Let $\epsilon_1$ be sufficiently small such that $\eigmax{\para_1 \extendedLmatrix{\para_2}} \leq 1-\tilde{\rho}, \eigmax{\para_2 \extendedLmatrix{\para_1}} \leq 1-\tilde{\rho}$, for some $\tilde{\rho}>0$. Then, apply control policy $u(t)=\Lmatrix{\para_2}x(t)$ to both systems. The closed-loop matrices $D_i=\para_i \extendedLmatrix{\para_2}$ are stable, and
		\begin{equation} \label{LipschitzKeq2}
		\Mnorm{D_1-D_2}{2} \leq \Mnorm{\extendedLmatrix{\para_2}}{2} \Mnorm{\para_1-\para_2}{2}.
		\end{equation}
		Letting $P=Q+\Lmatrix{\para_2}'R\Lmatrix{\para_2}$, the accumulative cost of System ($i$) is $x_0'F_ix_0$, where $F_i = \sum\limits_{n=0}^{\infty} D_i'^n P D_i^n$. The linear feedback $\Lmatrix{\para_2}$ is an optimal policy for System (2), i.e. $x_0'\Kmatrix{\para_2}x_0=x_0'F_2x_0$, and $\cost^{(1)}$ is the minimum accumulative cost for System (1), i.e. $x_0'\Kmatrix{\para_1}x_0 \leq x_0'F_1x_0$. Therefore, whenever $\cost^{(1)} \geq \cost^{(2)}$, \eqref{LipschitzKeq1}, \eqref{LipschitzKeq2} imply that 
		\begin{equation} \label{LipschitzKeq3}
		0 \leq \cost^{(1)} - \cost^{(2)} = x_0'\Kmatrix{\para_1}x_0 - x_0'\Kmatrix{\para_2}x_0 \leq x_0'\left(F_1-F_2\right)x_0 \leq \Mnorm{F_1-F_2}{2} \leq \Lipschitz{F} \Mnorm{\extendedLmatrix{\para_2}}{2} \Mnorm{\para_1-\para_2}{2}.
		\end{equation}
		Otherwise, if $\cost^{(1)} \leq \cost^{(2)}$, applying $u(t)=\Lmatrix{\para_1}x(t)$ to both systems, $E_i=\para_i \extendedLmatrix{\para_1}$, $i=1,2$ are stable, and
		\begin{equation} \label{LipschitzKeq4}
		\Mnorm{E_1-E_2}{2} \leq \Mnorm{\extendedLmatrix{\para_1}}{2} \Mnorm{\para_1-\para_2}{2}.
		\end{equation}
		Furthermore, the accumulative cost of System ($i$) is $x_0'G_ix_0$, where $G_i = \sum\limits_{n=0}^{\infty} E_i'^n P E_i^n$. Since $\Lmatrix{\para_1}$ is optimal for System (1), $x_0'\Kmatrix{\para_1}x_0=x_0'G_1x_0$, and $x_0'\Kmatrix{\para_2}x_0 \leq x_0'G_2x_0$. Therefore, \eqref{LipschitzKeq1}, \eqref{LipschitzKeq4} lead to 
		\begin{equation} \label{LipschitzKeq5}
		0 \leq \cost^{(2)} - \cost^{(1)} = x_0'\Kmatrix{\para_2}x_0 - x_0'\Kmatrix{\para_1}x_0 \leq x_0'\left(G_2-G_1\right)x_0 \leq \Mnorm{G_2-G_1}{2} \leq \Lipschitz{G} \Mnorm{\extendedLmatrix{\para_1}}{2} \Mnorm{\para_1-\para_2}{2}.
		\end{equation}
		Since $x_0$ is arbitrary, \eqref{LipschitzKeq3}, \eqref{LipschitzKeq5} yield the desired result.
	\end{proof}
	\begin{lem} \label{Wproofauxiliary}
	Let $\left\{ M^{(i)}\right\}_{i=1}^\infty$ be a sequence of $p \times p$ matrices. Whenever $\tau_{i-1} \leq t < \tau_i$, let $M_t=M^{(i)}$. Define
	\begin{equation*}
	\term{} = \sum\limits_{t=1}^{T} \norm{M_tx(t)}{2}^2.
	\end{equation*}
	On $\event{G} \cap \event{H}$, it holds that
	\begin{equation*}
	\term{} \leq \ssconstant_5 \sum\limits_{i=1}^{\episodecount{T}} n_i \Mnorm{M^{(i)}}{2}^2, 
	\end{equation*}
	for some constant $\ssconstant_5 < \infty$.
\end{lem}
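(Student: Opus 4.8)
The plan is to decompose $\term{}$ across the episodes, factor out the operator norm of each $M^{(i)}$, and then reduce the claim to a per-episode bound of the form $\tr{V^{(i)}} \leq \ssconstant_5\, n_i$ with a single constant $\ssconstant_5$ independent of $i$ and $T$. Since $M_t = M^{(i)}$ for $\lceil \tau_{i-1}\rceil \leq t < \lceil \tau_i \rceil$ and $\norm{M^{(i)} x(t)}{2}^2 \leq \Mnorm{M^{(i)}}{2}^2 \norm{x(t)}{2}^2$ by submultiplicativity, I would first write
\begin{equation*}
\term{} = \sum\limits_{i=1}^{\episodecount{T}} \sum\limits_{t=\lceil \tau_{i-1}\rceil}^{\lceil \tau_i \rceil-1} \norm{M^{(i)} x(t)}{2}^2 \leq \sum\limits_{i=1}^{\episodecount{T}} \Mnorm{M^{(i)}}{2}^2 \tr{V^{(i)}},
\end{equation*}
where the identity $\sum_t \norm{x(t)}{2}^2 = \tr{\sum_t x(t)x(t)'} = \tr{V^{(i)}}$ uses the definition \eqref{algo2eq5}. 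Thus everything reduces to controlling $\tr{V^{(i)}}$ by a constant multiple of the episode length $n_i = \lfloor \tau_i - \tau_{i-1} \rfloor$.

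For the per-episode bound I would reuse the Lyapunov analysis of Theorem \ref{stablemin} applied to the block of observations forming episode $i$. On $\event{H}$ we have $\optpara{i} \in \paraspace{0}$, so the closed-loop matrix $D_i = \para_0 \extendedLmatrix{\optpara{i}}$ is stable and the within-episode dynamics read $x(t+1) = D_i x(t) + w(t+1)$. Consequently the per-episode instance of the covariance bound \eqref{samplecoveigmax}, established inside the proof of Theorem \ref{stablemin} on the good event underlying $\event{G} \cap \event{H}$, gives
\begin{equation*}
\eigmax{V^{(i)}} \leq \frac{3}{2}\, \eigmax{C}\, \MJordanconst{}{D_i'}^2\, n_i,
\end{equation*}
whence $\tr{V^{(i)}} \leq p\, \eigmax{V^{(i)}} \leq \tfrac{3p}{2}\eigmax{C}\MJordanconst{}{D_i'}^2 n_i$.

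The main obstacle is the uniformity step, namely showing that a single constant works for every episode. Because $\paraspace{0}$ is a \emph{bounded} stabilizing set, each selected $\optpara{i}$ yields a closed-loop matrix $D_i$ with spectral radius strictly below one; the delicate point is ruling out these spectral radii approaching the stability boundary as $i$ grows, which would blow up the Jordan constant $\MJordanconst{}{D_i'}$ of Definition \ref{Jordandeff}. I would argue this via boundedness of $\paraspace{0}$ together with continuity of $\para \mapsto \para \extendedLmatrix{\para}$ and of $\MJordanconst{}{\cdot}$, so that the closed-loop spectral radii stay uniformly bounded away from one and $\sup_i \MJordanconst{}{D_i'} < \infty$ — this is exactly the uniform constant already implicit in the joint state bound \eqref{Gproofstatenorm}. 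Setting $\ssconstant_5 = \tfrac{3p}{2}\eigmax{C}\big(\sup_i \MJordanconst{}{D_i'}\big)^2 < \infty$ and combining the two displays yields $\term{} \leq \ssconstant_5 \sum_{i=1}^{\episodecount{T}} n_i \Mnorm{M^{(i)}}{2}^2$, as desired.
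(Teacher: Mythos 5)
Your proof is correct and follows essentially the same route as the paper's: decompose $\term{}$ over episodes, bound the per-episode contribution by $\Mnorm{M^{(i)}}{2}^2$ times (a trace/operator-norm bound on) $V^{(i)}$, control $\eigmax{V^{(i)}}$ via \eqref{samplecoveigmax}, and take a uniform supremum of the Jordan constants $\MJordanconst{}{D_i'}$ over episodes. If anything, you are slightly more explicit than the paper about why $\sup_i \MJordanconst{}{D_i'} < \infty$, which the paper simply asserts when defining $\ssconstant_5$.
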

\begin{proof}[\bf Proof of Lemma \ref{Wproofauxiliary}]
	Letting $D_i = \para_0 \extendedLmatrix{\optpara{i}}$ be the stable closed-loop matrix during episode $i$, and
	\begin{equation*}
	V^{(i)}= \sum\limits_{t=\tau_{i-1}}^{\tau_i-1} x(t)x(t)',	
	\end{equation*}
	be the empirical covariance matrix of episode $i$, according to \eqref{samplecoveigmax}, on $ \event{G} \cap \event{H}$ we have \begin{equation*}
	\eigmax{V^{(i)}} \leq \frac{3}{2} \eigmax{C} \MJordanconst{}{D_i'}^2 n_i.
	\end{equation*}
	Letting 
	\begin{equation*}
	\ssconstant_5 = \frac{3}{2} p^2 \eigmax{C} \:\:\sup\limits_{i \geq 1} \MJordanconst{}{D_i'}^2 < \infty,
	\end{equation*} 
	we have
	\begin{eqnarray*}
		\term{} &=& \sum\limits_{t=1}^{T} x(t)' M_t' M_tx(t)
		\leq \sum\limits_{i=1}^{\episodecount{T}} \tr{{M^{(i)}} V^{(i)} {M^{(i)}}'} \\
		&\leq& \sum\limits_{i=1}^{\episodecount{T}} p \eigmax{V^{(i)}} \Mnorm{{M^{(i)}}'}{2}^2
		\leq \frac{\ssconstant_5}{p} \sum\limits_{i=1}^{\episodecount{T}} n_i \eigmax{{M^{(i)}}{M^{(i)}}'} \\
		&\leq& \frac{\ssconstant_5}{p} \sum\limits_{i=1}^{\episodecount{T}} n_i \tr{{M^{(i)}}'{M^{(i)}}} \leq \ssconstant_5 \sum\limits_{i=1}^{\episodecount{T}} n_i \Mnorm{M^{(i)}}{2}^2 .
	\end{eqnarray*}
\end{proof}
	
	\begin{lem} \label{Gprooflem4}
		Letting $U_0=I_q$, for $i=1,2,\cdots$ define the symmetric $q \times q$ matrix $U_i$ as
		\begin{equation*}
		U_i = \extendedLmatrix{\optpara{i}} V^{(i)} \extendedLmatrix{\optpara{i}}' = \extendedLmatrix{\optpara{i}} \sum\limits_{t=\tau_{i-1}}^{\tau_i} x(t)x(t)' \extendedLmatrix{\optpara{i}}',
		\end{equation*}
		and for arbitrary nonzero $\para \in \R^{p \times q}$, let the real-valued sequence $\left\{ s_j \left( \para \right) \right\}_{j=1}^\infty$ be
		\begin{equation*}
		s_j \left( \para \right) = \frac{\Mnorm{ \para \sum\limits_{i=0}^{j} U_i \para'}{2}}{\Mnorm{ \para \sum\limits_{i=0}^{j-1} U_i \para'}{2}}.
		\end{equation*}
		Note that $s_j \left( \para \right)$ does not depend on the magnitude of $\para$. The Cesaro mean of the sequence $\left\{ s_j \left( \para \right) \right\}_{j=1}^\infty$ is bounded; i.e. for some constant $\ssconstant_{6}$, on $\event{G} \cap \event{H}$ we have
		\begin{equation*}
		\sup\limits_{n \geq 1} \frac{1}{n} \sum\limits_{j=1}^{n} s_j \left( \para \right) \leq \ssconstant_{6}.
		\end{equation*}
	\end{lem}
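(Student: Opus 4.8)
The plan is to exploit the positive-semidefinite (PSD) monotonicity of the partial sums and to sandwich their top eigenvalue between two constant multiples of a scalar potential built from the per-episode excitation. Since each $U_i$ is PSD and $s_j(\para)$ is invariant under $\para \mapsto \alpha \para$, I first normalize $\Mnorm{\para'}{2}=1$. Writing $Y_i=\para\extendedLmatrix{\optpara{i}}$ and $W_j=\para\sum_{i=0}^{j}U_i\para'$, each $W_j$ is PSD with $W_j=W_{j-1}+\para U_j\para'\succeq W_{j-1}$, so $\Mnorm{W_j}{2}=\eigmax{W_j}$ is nondecreasing and $s_j(\para)=\Mnorm{W_j}{2}/\Mnorm{W_{j-1}}{2}\geq 1$. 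Moreover $\para U_i\para'=Y_iV^{(i)}Y_i'$, and on $\event{G}\cap\event{H}$ the persistent-excitation bound of Theorem \ref{stablemin} together with the upper eigenvalue estimate \eqref{samplecoveigmax} give $c_1 n_i Y_iY_i'\preceq \para U_i\para'\preceq c_2 n_i Y_iY_i'$, where $c_1=\eigmin{C}/2$ and $c_2=\tfrac{3}{2}\eigmax{C}\sup_i\MJordanconst{}{D_i'}^2<\infty$ (finiteness by boundedness of $\paraspace{0}$, exactly as in Lemma \ref{Wproofauxiliary}).

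Applying the Weyl inequality to the increment gives $\Mnorm{W_j}{2}\leq \Mnorm{W_{j-1}}{2}+c_2 n_j\Mnorm{Y_j}{2}^2$, while from the $U_0$ and $U_{j-1}$ terms $\Mnorm{W_{j-1}}{2}\geq \max\{1,\,c_1 n_{j-1}\Mnorm{Y_{j-1}}{2}^2\}$. Hence
\[
s_j(\para)\leq 1+\frac{c_2 n_j\Mnorm{Y_j}{2}^2}{\max\{1,\,c_1 n_{j-1}\Mnorm{Y_{j-1}}{2}^2\}}.
\]
I would then split the episodes into two regimes. On episodes with $\Mnorm{Y_{j-1}}{2}\geq \tfrac12\Mnorm{Y_j}{2}$ the denominator controls the numerator: because consecutive episode lengths have a bounded ratio $n_j\leq R\,n_{j-1}$ (immediate from \eqref{algo2eq2}, as $n_j\asymp\rrate^{j/q}$ up to polylogarithmic factors), each such $s_j(\para)$ is at most $1+4c_2R/c_1$, a fixed constant.

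The remaining episodes are those with $\Mnorm{Y_{j-1}}{2}<\tfrac12\Mnorm{Y_j}{2}$. Writing $Y_j-Y_{j-1}=B\big(\Lmatrix{\optpara{j}}-\Lmatrix{\optpara{j-1}}\big)$ for $\para=[A,B]$, the triangle inequality forces $\Mnorm{Y_j}{2}\leq 2\delta_j$ with $\delta_j=\Mnorm{B}{2}\Mnorm{\Lmatrix{\optpara{j}}-\Lmatrix{\optpara{j-1}}}{2}$, so the increment $n_j\Mnorm{Y_j}{2}^2\leq 4n_j\delta_j^2$ is driven entirely by the change of the optimistic feedback. Here I would invoke the convergence of the feedbacks on $\event{H}$: by the Lipschitz continuity of Lemma \ref{LipschitzK} and the shrinkage of the confidence sets \eqref{algo2eq6}, $\Mnorm{\Lmatrix{\optpara{j}}-\Lmatrix{\optpara{j-1}}}{2}$ decays at the rate of the confidence radius, $\delta_j^2=\order{\prediction{n_{j-1}}{\delta/j^2}/n_{j-1}}$. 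These are precisely the episodes on which the test direction $\para\extendedLmatrix{\optpara{\cdot}}$ nearly degenerates; the point of the argument is that the accumulated excitation already present in $\Mnorm{W_{j-1}}{2}$ absorbs their contribution, so that after averaging the two regimes one obtains $\tfrac1n\sum_{j=1}^n s_j(\para)\leq \ssconstant_6$.

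The main obstacle is exactly this second regime. The quantity $\Mnorm{Y_{j-1}}{2}$ admits no lower bound uniform over the unit sphere of $\para$: for any feedback the left null space of $\extendedLmatrix{\optpara{j-1}}$ is nontrivial (dimension $r$), so a unit $\para$ can make $Y_{j-1}$ vanish while $Y_j\neq 0$, producing one large ratio. Converting the heuristic ``the accumulated excitation dominates the feedback increment'' into a rigorous, $\para$-uniform bound is the delicate step: it requires quantitatively trading the geometric growth of $\{n_i\}$ against the confidence-set-driven decay of $\delta_j$, and showing that near-degenerate test directions contribute only a bounded amount to the Cesaro sum rather than a polylogarithmic one. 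Everything else — PSD monotonicity, the eigenvalue sandwich, and the bounded-ratio regime — is routine given Theorem \ref{stablemin}, \eqref{samplecoveigmax}, and Lemma \ref{LipschitzK}.
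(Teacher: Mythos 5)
You have correctly isolated the crux of the lemma --- obtaining a bound on $s_j(\para)$ that is uniform over directions $\para$ for which $\para\extendedLmatrix{\optpara{j-1}}$ nearly vanishes --- but your proposal does not close that step, and the regime split you set up cannot close it in the form stated. Even granting the decay $\delta_j^2=\order{\prediction{n_{j-1}}{\delta/j^2}/n_{j-1}}$, the bound you would get in the second regime, $s_j(\para)\leq 1+4c_2 n_j\delta_j^2$ (using only $\Mnorm{W_{j-1}}{2}\geq 1$ from the $U_0=I_q$ term), has $n_j\delta_j^2$ growing polylogarithmically in $j$, so the Cesaro mean comes out polylogarithmic rather than the constant $\ssconstant_6$ asserted; you acknowledge this yourself. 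Moreover, the decay of $\delta_j$ does not follow from Lemma \ref{LipschitzK} together with the shrinkage of the confidence sets alone: the sets in \eqref{algo2eq6} only constrain $\para\extendedLmatrix{\optpara{i}}$ along the directions actually excited during episode $i$, so they yield $\left(\optpara{i}-\para_0\right)\extendedLmatrix{\optpara{i-1}}\to 0$ but not $\optpara{i}\to\para_0$, and hence not, by Lipschitz continuity alone, $\Lmatrix{\optpara{i}}-\Lmatrix{\optpara{i-1}}\to 0$.

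The paper's proof sidesteps the direction-by-direction analysis entirely through a determinant-ratio majorization (Lemma 11 of Abbasi-Yadkori and Szepesv\'ari \cite{abbasi2011regret}): for every nonzero $\para$, $s_j(\para)\leq \tilde s_j$, where $\tilde s_j=\det \left(\sum\limits_{i=0}^{j}U_i\right)/\det \left(\sum\limits_{i=0}^{j-1}U_i\right)$. This bound is already uniform in $\para$ and is insensitive to degenerate test directions because $U_0=I_q$ keeps every determinant bounded below by one. The problem then reduces to showing $\tilde s_j\to\rrate$, which rests on two facts: the convergence $n_i^{-1}V^{(i)}\to\sum_{\ell=0}^{\infty}D_0^{\ell}C D_0'^{\ell}$ from Theorem \ref{stablemin}, and the convergence $\Lmatrix{\optpara{i}}\to\Lmatrix{\para_0}$. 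The latter is precisely the ingredient your sketch lacks; the paper establishes it by showing that any limit point $L_\infty$ of the bounded sequence $\left\{\Lmatrix{\optpara{i}}\right\}$ satisfies $A_\infty+B_\infty L_\infty=A_0+B_0L_\infty$ for some optimistic $\para_\infty$ in the intersection of all confidence sets, whence Lemma \ref{OFU} and the uniqueness statement of Lemma \ref{stabilizable} force $L_\infty=\Lmatrix{\para_0}$. To repair your argument you would need to import both of these ingredients, at which point you would essentially have reproduced the paper's proof.
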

	\begin{proof}[\bf Proof of Lemma \ref{Gprooflem4}]
		First, applying the second part of Theorem \ref{stablemin}, we have
		\begin{equation} \label{Gprooflem4eq2}
		\lim\limits_{i \to \infty} \frac{1}{n_i} V^{(i)} = \lim\limits_{i \to \infty} \sum\limits_{\ell=0}^{\infty} {D_i}^\ell C {D_i'}^\ell,
		\end{equation}
		where $D_i = \para_0 \extendedLmatrix{\optpara{i}}$ is the stable closed-loop transition matrix during episode $i$. 
		
		Then, the sequence $\left\{ \Lmatrix{\optpara{i}} \right\}_{i=1}^\infty$ converges as follows. According to \eqref{boundedT}, it is bounded. So, divergence of this bounded sequence implies convergence of two subsequences to distinct limits. Let $L_\infty$ be the limit point of a subsequence. According to \eqref{algo2eq7}, $\left\{ \paraspace{i} \right\}_{i=0}^\infty$ is strictly decreasing: $\paraspace{i+1} \subsetneqq \paraspace{i}$. Further, using Theorem \ref{stablemin} and Corollary \ref{stableprediction}, since $\lim\limits_{i \to \infty} \frac{\prediction{n_i}{\delta i^{-2}}}{n_i}=0$ we have 
		\begin{equation*}
		0= \lim\limits_{i \to \infty} \left(\optpara{i}-\para_0\right) \extendedLmatrix{\optpara{i-1}} = \lim\limits_{i \to \infty} \left(\optpara{i}-\para_0\right) \begin{bmatrix}
		I_p \\ L_\infty
		\end{bmatrix}.
		\end{equation*} 
		So, $L_\infty$ is a stationary point in the sense that for some $\para_\infty \in \bigcap\limits_{i=0}^\infty \paraspace{i}$, we have
		\begin{equation} \label{Gprooflem4eq1}
		A_\infty+B_\infty L_\infty = A_0+B_0L_\infty.
		\end{equation}
		
		Since $\event{H}$ holds, and at the end of every episode we are using the OFU principle to select $\optpara{i}$, we have $\optcost{\para_\infty} \leq \optcost{\para_0}$. Hence, by Lemma \ref{OFU}, \eqref{Gprooflem4eq1} implies that $L_\infty$ is an optimal linear feedback for the true system $\para_0$. However, according to Lemma \ref{stabilizable}, $\Lmatrix{\para_0}$ is unique; i.e. $L_\infty=\Lmatrix{\para_0}$. Therefore, the limit is unique, which contradicts the divergence. Moreover, the convergence is to $\Lmatrix{\para_0}$; i.e.
		\begin{equation} \label{Gprooflem4eq3}
		\lim\limits_{i \to \infty} D_i = \para_0 \extendedLmatrix{\para_0} = D_0.
		\end{equation}
		Next, as shown in the proof of Lemma \ref{LipschitzK}, $\sum\limits_{\ell=0}^{\infty} {D_i}^\ell C {D_i'}^\ell$ is a Lipschitz function of $D_i$. Thus, plugging \eqref{Gprooflem4eq3} in \eqref{Gprooflem4eq2} we get
		\begin{equation*} 
		\lim\limits_{i \to \infty} \det \left(\frac{1}{n_i} V^{(i)}\right) = \det \left(\sum\limits_{\ell=0}^{\infty} {D_0}^\ell C {D_0'}^\ell\right),
		\end{equation*}
		which yields
		\begin{equation*} 
		\lim\limits_{i \to \infty} \det \left(\frac{1}{n_i} U^{(i)}\right) = \det \left( \extendedLmatrix{\para_0}\sum\limits_{\ell=0}^{\infty} {D_0}^\ell C {D_0'}^\ell \extendedLmatrix{\para_0}' \right).
		\end{equation*}
		Therefore, defining 
		\begin{equation*}
		\tilde{s}_j = \frac{\det \left( \sum\limits_{i=0}^{j} U_i \right)}{\det \left( \sum\limits_{i=0}^{j-1} U_i \right)},
		\end{equation*}
		we have
		\begin{equation*}
		\lim\limits_{j \to \infty} \tilde{s}_j = \lim\limits_{j \to \infty} \left(\frac{n_j}{n_{j-1}}\right)^q \frac{\det \left( \frac{1}{n_j}\sum\limits_{i=0}^{j} U_i \right)}{\det \left( \frac{1}{n_{j-1}} \sum\limits_{i=0}^{j-1} U_i \right)} = \lim\limits_{j \to \infty} \left(\frac{n_j}{n_{j-1}}\right)^q.
		\end{equation*}
		Note that according to \eqref{samplecoveigmax}, on $\event{G} \cap \event{H}$ the matrix $\frac{1}{n_i} U_i$ is bounded. Since the lengths of the episodes, $n_i, i=1,2,\cdots$, are growing exponentially, the matrix $\frac{1}{n_j} \sum\limits_{i=0}^{j} U_i$ is bounded as well. Using the definition of episode length in \eqref{algo2eq2}, we get
		\begin{equation} \label{Gprooflem4eq4}
		\lim\limits_{j \to \infty} \tilde{s}_j = \rrate.
		\end{equation}
		Finally, according to Lemma 11 in the work of Abbasi-Yadkori and Szepesv{\'a}ri \cite{abbasi2011regret},
		\begin{equation*}
		\sup\limits_{\para \neq 0} s_j \left(\para\right) \leq \tilde{s}_j.
		\end{equation*}
		So, \eqref{Gprooflem4eq4} implies the desired result.
	\end{proof}

\begin{proof}[\bf Proof of Lemma \ref{Gprooflem5}]
		Assuming $\event{G} \cap \event{H}$ holds, consider the following expression:
		\begin{equation*}
		\term{5} = \sum\limits_{t=1}^{T} \norm{\left( \para_t - \para_0 \right) \extendedLmatrix{\para_t} x(t)}{2}^2.
		\end{equation*}
		Since $\para_t$ does not change during each episode, we can write
		\begin{equation} \label{Gprooflem5eq1}
		\term{5} \leq \sum\limits_{j=1}^{\episodecount{T}} \sum\limits_{t=\lceil \tau_{j-1} \rceil }^{ \lceil \tau_j \rceil-1} \norm{\left( \optpara{j} - \para_0 \right) \extendedLmatrix{\optpara{j}} x(t)}{2}^2.
		\end{equation}
		Letting $\left\{ U_i \right\}_{i=0}^\infty$ be as defined in Lemma \ref{Gprooflem4}, $\sum\limits_{i=0}^{j} U_i$ is invertible and 
		\begin{eqnarray*}
			\sum\limits_{t=\lceil \tau_{j-1} \rceil }^{ \lceil \tau_j \rceil-1} \norm{ \left( \sum\limits_{i=0}^{j} U_i \right)^{-1/2} \extendedLmatrix{\optpara{i}} x(t)}{2}^2
			&=& \sum\limits_{t=\lceil \tau_{j-1} \rceil }^{ \lceil \tau_j \rceil-1} x(t)' \extendedLmatrix{\optpara{i}}' \left( \sum\limits_{i=0}^{j} U_i \right)^{-1} \extendedLmatrix{\optpara{i}} x(t) \\
			&=& \sum\limits_{t=\lceil \tau_{j-1} \rceil }^{ \lceil \tau_j \rceil-1} \tr{\left( \sum\limits_{i=0}^{j} U_i \right)^{-1} \extendedLmatrix{\optpara{i}} x(t)x(t)' \extendedLmatrix{\optpara{i}}'  }\\
			&=& \tr{\left( \sum\limits_{i=0}^{j} U_i \right)^{-1}U_j} \leq \tr{I_q}= q.	
		\end{eqnarray*}
		Further, using definition of $\left\{ s_j \left(\para \right) \right\}_{j=1}^\infty$ in Lemma \ref{Gprooflem4} we have
		\begin{eqnarray*}
			\Mnorm{\left( \optpara{j} - \para_0 \right) \left( \sum\limits_{i=0}^{j} U_i \right)^{1/2}}{2}^2
			&=& \Mnorm{ \left( \sum\limits_{i=0}^{j} U_i \right)^{1/2} \left( \optpara{j} - \para_0 \right)' \left( \optpara{j} - \para_0 \right) \left( \sum\limits_{i=0}^{j} U_i \right)^{1/2}}{2} \\
			&\leq& \tr{ \left( \optpara{j} - \para_0 \right) \sum\limits_{i=0}^{j} U_i \left( \optpara{j} - \para_0 \right)'} \\
			&\leq& p \Mnorm{\left( \optpara{j} - \para_0 \right) \sum\limits_{i=0}^{j} U_i \left( \optpara{j} - \para_0 \right)'}{2} \\
			&\leq& p \Mnorm{\left( \optpara{j} - \para_0 \right) \sum\limits_{i=0}^{j-1} U_i \left( \optpara{j} - \para_0 \right)'}{2} s_j \left( \optpara{j} - \para_0 \right).
		\end{eqnarray*}
		However, according to the definition of $\paraspace{j}$ in \eqref{algo2eq7}, both $\optpara{j}$, and $\para_0$ belong to $\bigcap\limits_{i=1}^{j-1} \tempparaspace{i}$. Therefore, \eqref{algo2eq6} implies
		\begin{eqnarray*}
			\Mnorm{\left( \optpara{j} - \para_0 \right) \sum\limits_{i=0}^{j-1} U_i \left( \optpara{j} - \para_0 \right)'}{2}
			&\leq& \sum\limits_{i=0}^{j-1} \Mnorm{\left( \optpara{j} - \para_0 \right) U_i \left( \optpara{j} - \para_0 \right)'}{2} \\
			&\leq& \Mnorm{\left(\optpara{j} - \para_0\right)'}{2}^2 + 4 \sum\limits_{i=1}^{j-1}  \prediction{n_i}{\frac{\delta}{i^2}} \\
			&\leq& 4 \ssconstant_{1}^2 + 4 \episodecount{T} \prediction{T}{\frac{\delta}{\episodecount{T}^2}},
		\end{eqnarray*}
		where in the last inequality above \eqref{boundedT}, $n_i \leq T$, and $i \leq \episodecount{T}$ are used. Putting everything together we have 
		\begin{eqnarray*}
			&&\sum\limits_{t=\lceil \tau_{j-1} \rceil }^{ \lceil \tau_j \rceil-1} \norm{\left( \optpara{j} - \para_0 \right) \extendedLmatrix{\optpara{j}} x(t)}{2}^2 \\
			&\leq& \Mnorm{\left( \optpara{j} - \para_0 \right) \left( \sum\limits_{i=0}^{j} U_i \right)^{1/2}}{2}^2 \sum\limits_{t=\lceil \tau_{j-1} \rceil }^{ \lceil \tau_j \rceil-1} \norm{ \left( \sum\limits_{i=0}^{j} U_i \right)^{-1/2} \extendedLmatrix{\optpara{i}} x(t)}{2}^2 \\
			&\leq& 4 pq s_j \left( \optpara{j} - \para_0 \right) \left( \ssconstant_{1}^2 + \episodecount{T} \prediction{T}{\frac{\delta}{\episodecount{T}^2}}\right).
		\end{eqnarray*}
		Plugging in \eqref{Gprooflem5eq1}, and using Lemma \ref{Gprooflem4}, leads to 
		\begin{equation} \label{Gprooflem5eq2}
		\term{5} \leq 4 pq \ssconstant_{6} \episodecount{T} \left( \ssconstant_{1}^2 + \episodecount{T} \prediction{T}{\frac{\delta}{\episodecount{T}^2}}\right).
		\end{equation}
		Going back to $\term{4}$, express it as
		\begin{equation*}
		\term{4} = \sum\limits_{t=1}^{T} x(t)' \extendedLmatrix{\para_t}' \left(\para_0+\para_t\right)' \Kmatrix{\para_t} \left(\para_0 - \para_t\right) \extendedLmatrix{\para_t} x(t). 
		\end{equation*}
		Letting $M_t=\Kmatrix{\para_t} \left(\para_0+\para_t\right) \extendedLmatrix{\para_t}$, the Cauchy-Schwartz inequality gives 
		\begin{eqnarray*}
			\term{4} &\leq& \sum\limits_{t=1}^{T} \norm{M_t x(t)}{2} \norm{\left(\para_0 - \para_t\right) \extendedLmatrix{\para_t} x(t)}{2}
			\leq \term{5}^{1/2} \left[ \sum\limits_{t=1}^{T} \norm{M_t x(t)}{2}^2 \right]^{1/2}. 
		\end{eqnarray*}
		By \eqref{ricatti3}, for all stabilizable $\para \in \R^{p \times q}$, the equality
		\begin{equation*}
		\extendedLmatrix{\para}' \para' \Kmatrix{\para} \para \extendedLmatrix{\para} = \Kmatrix{\para} - Q - \Lmatrix{\para}' R \Lmatrix{\para}
		\end{equation*}
		holds. Since $Q + \Lmatrix{\para_t}' R \Lmatrix{\para_t}$ is positive semidefinite,
		\begin{eqnarray}
		\eigmax{\extendedLmatrix{\para_t}' \para_t' \Kmatrix{\para_t} \para_t \extendedLmatrix{\para_t}} \leq \eigmax{\Kmatrix{\para_t}} \leq \ssconstant_{0} \label{Gprooflem5eq3}.
		\end{eqnarray}
		Moreover, for arbitrary $v \in \R^p$ we have
		\begin{equation*}
		\eigmin{R} \norm{\Lmatrix{\para_t}v}{2}^2 \leq v' \Lmatrix{\para_t}' R \Lmatrix{\para_t} v \leq v' \Kmatrix{\para_t} v \leq \ssconstant_{0} \norm{v}{2}^2,
		\end{equation*}
		which implies
		\begin{equation}
		\Mnorm{\extendedLmatrix{\para_t}}{2}^2 \leq 1+ \Mnorm{\Lmatrix{\para_t}}{2}^2 \leq 1+ \frac{\ssconstant_{0}}{\eigmin{R}} \label{Gprooflem5eq4}.
		\end{equation}
		Putting \eqref{Gprooflem4eq3}, \eqref{Gprooflem5eq4} together, we have
		\begin{equation*}
		\Mnorm{M_t}{2} \leq \Mnorm{\Kmatrix{\para_t} \para_t \extendedLmatrix{\para_t}}{2} + \Mnorm{\Kmatrix{\para_t} \para_0 \extendedLmatrix{\para_t}}{2} \leq \ssconstant_{0}+\ssconstant_{0} \Mnorm{\para_0}{2} \left(1+ \frac{\ssconstant_{0}}{\eigmin{R}}\right).
		\end{equation*}
		Thus, Lemma \ref{Wproofauxiliary} implies
		\begin{eqnarray*}
		\sum\limits_{t=1}^{T} \norm{M_t x(t)}{2}^2 \leq \ssconstant_5 \sum\limits_{i=1}^{\episodecount{T}} n_i \Mnorm{M^{(i)}}{2}^2 \leq T \episodecount{T} \ssconstant_{0}^2 \ssconstant_5 \left(1+ \Mnorm{\para_0}{2} \left(1+ \frac{\ssconstant_{0}}{\eigmin{R}}\right)\right)^2,
		\end{eqnarray*}
		which in addition to \eqref{Gprooflem5eq2} lead to the desired result.
	\end{proof}
	
	\begin{proof}[\bf Proof of Lemma \ref{Gprooflem6}]
		According to the definition of episode length in \eqref{algo2eq2}, we have
		\begin{eqnarray*}
			\tau_{i}-\tau_{i-1} &=& \rrate^{i/q} \left( \samplesize{\ref{stablemin}}{\frac{\eigmin{C}}{2}}{\frac{\delta}{i^2}}+1\right) \\
			&\geq& \rrate^{i/q} \left( \samplesize{\ref{stablemin}}{\frac{\eigmin{C}}{2}}{\delta}+1\right) \\
			&=& \rrate^{\frac{i-1}{q}} \tau_1.
		\end{eqnarray*}
		Since $\tau_{\episodecount{T}} \leq T$, we get
		\begin{equation*}
		T \geq \sum\limits_{i=1}^{\episodecount{T}} \left(\tau_i - \tau_{i-1}\right) \geq \frac{\rrate^{\frac{\episodecount{T}}{q}}-1}{\rrate^{\frac{1}{q}}-1} \tau_1,
		\end{equation*}
		which yields
		\begin{equation*}
		\episodecount{T} \leq \frac{q}{\log \rrate} \log \left( \frac{T \left(\rrate^{1/q}-1\right)}{\tau_1}+1\right).
		\end{equation*}
	\end{proof}


\end{document}